\newtheorem{theorem}{Theorem}
\newtheorem{proposition}[theorem]{Proposition}
\newtheorem{lemma}[theorem]{Lemma}
\theoremstyle{remark}
\newtheorem{remark}{Remark}
\NewDocumentCommand \mpr@label { m }{%
  \def\@currentlabelname{#1}%
   \phantomsection 
}
\DeclareDocumentCommand \inferdef {m m m}{%
   \inferrule*[vcenter,right=#1]{#2}{#3}
   \mpr@label{\textsc{#1}}
}%
\newcommand{\ruleref}[1]{Rule \nameref{#1}}
\title{Generating Higher Identity Proofs in Homotopy Type Theory}
\author{Thibaut Benjamin}
\begin{document}

\maketitle

\begin{abstract}
  Finster and Mimram have defined a dependent type theory called \catt, which
  describes the structure of \(\omega\)\-categories. Types in homotopy type
  theory with their higher identity types form weak \(\omega\)\-groupoids, so
  they are in particular weak \(\omega\)\-categories. In this article, we show
  that this principle makes homotopy type theory into a model of \catt, by
  defining a translation principle that interprets an operation on the cell of
  an \(\omega\)\-category as an operation on higher identity types. We then
  illustrate how this translation allows to leverage several mechanisation
  principles that are available in \catt, to reduce the proof effort required to
  derive results about the structure of identity types, such as the existence of
  an Eckmann-Hilton cell.
\end{abstract}

\section{Introduction}

Types in Martin-Löf type theory with intentional equality are weak
\(\omega\)-groupoids~\cite{martin-lof_intuitionistic_1984}. This is one of the
foundational observation of homotopy type
theory~\cite{theunivalentfoundationsprogram_homotopy_2013} (\hott). In this
article, we work with a fragment of Martin-Löf type theory that we call \hott to
emphasise that we care about higher identity types, even though we do not use
the univalence axiom or higher inductive types. Being weak \(\omega\)-groupoids,
types in \hott are in particular weak \(\omega\)-categories, by forgetting the
preferred direction. On the other hand, Finster and
Mimram~\cite{finster_typetheoretical_2017} have introduced the dependent type
theory \catt, which is the theory of weak \(\omega\)-categories. \catt and \hott
are both dependent type theories, one is the theory of \(\omega\)-categories
while in the other one, every type is an \(\omega\)-category. This article
establishes a formal connection between the two theories.

\paragraph{Contributions.}
In this article, we define a translation scheme that allows to transport \catt
terms onto \hott terms, by viewing higher cells in \catt as higher identity
proofs in \hott. We then prove the correctness of this translation. This means
that the translation of a well-formed term in context in \catt is a well-formed
term in \hott, whose type we characterise as the translation of the type of the
original term in \catt. The translation is defined by induction on the syntax of
\catt, and the main difficulty is to translate each of the operation making the
\(\omega\)-category structure into successive applications of the \(\J\) rule in
\hott. In the terminology of Boulier, Pedrot and
Tabareau~\cite{boulier_next_2017}, we show that \hott is a syntactic model of
\catt. We also present an implementation of this translation that generates
terms in \coq and show how it can be used in combination with mechanised
reasoning available in \catt to reduce the proof effort required to generate
certain complex terms in \hott.

\paragraph{Related works.}
Weak \(\omega\)-groupoids were first defined by
Grothendieck~\cite{grothendieck_pursuing_2021} with the intent of modelling
spaces up to homotopy. The connection between types and groupoids was first
discovered by Hofmann and Streicher~\cite{hofmann_groupoid_1998}.
Lumsdaine~\cite{lumsdaine_weak_2010}, Van den Berg and
Garner\cite{vandenberg_types_2011} and Altenkirch and
Rypacek~\cite{altenkirch_syntactical_2012} then promoted this to
\(\omega\)-groupoids, showing that types with their iterated identity types are
endowed with the structure of weak \(\omega\)-groupoids. While mathematically
weak \(\omega\)-groupoids are described as a collection of cells in every
dimension that allow for partial composition operations satisfying
associativity, unitality and exchange laws up to higher cells, in Martin-Löf
type theory, the entire structure simply emerges from the \(\J\) rule. With a
careful examination of how the \(\J\) rule yields a weak \(\omega\)-groupoid
structure, Brunerie~\cite{brunerie_homotopy_2016} proposed a definition of
\(\omega\)-groupoids formulated as a dependent type theory.

Weak \(\omega\)-categories were first defined by
Batanin~\cite{batanin_monoidal_1998}. The definition was then elaborated upon by
Leinster~\cite{leinster_higher_2004}. More recently,
Maltsiniotis~\cite{maltsiniotis_grothendieck_2010} proposed an alternative
definition, which is based on Grothendieck's definition of groupoids and
modifies it by enforcing a privileged direction. This definition has been
proven equivalent to that of Batanin and Leinster by
Ara~\cite{ara_infinigroupoides_2010} and Bourke~\cite{bourke_iterated_2018}.

The definition the theory \catt by Finster and
Mimram~\cite{finster_typetheoretical_2017} was inspired by both Brunerie's
formulation of weak \(\omega\)-groupoids in dependent type theory, and
Maltsiniotis' definition of weak \(\omega\)-categories from weak
\(\omega\)-groupoids. Several works have been conducted around this theory. It
was generalised by Dean et al.~\cite{dean_computads_2024} to give an inductive
definition of computads for weak \(\omega\)-categories, it was proven by
Benjamin, Finster and Mimram~\cite{benjamin_globular_2021} that the models of
\catt are precisely Grothendieck-Maltsiniotis \(\omega\)-categories, and several
meta-operations for \catt have been proposed, such as the suspension and
opposites~\cite{benjaminHomCategoriesComputad2024}, computation of inverses and
invertibility witnesses~\cite{benjamin_invertible_2024},
functorialisation~\cite{benja2020}, allowing for mechanised reasoning
in the proof-assistant \catt.

\paragraph{Overview of the paper.}
In Section~\ref{sec:hott}, we present the fragment of Martin-Löf type theory
that we work with, which we call \hott in this article to emphasise the fact
that we are interested in the structure of higher identity types.
Section~\ref{sec:catt} is dedicated to the presentation of the dependent type
theory \catt, which relies on a simpler type theory called \gsett. In
Section~\ref{sec:gsett-to-hott}, we present our algorithm to translate terms in
\gsett into terms in \hott, and we promote this into a translation of terms in
\catt into terms in \hott in Section~\ref{sec:catt-to-hott}. Both these
translation come with proofs of correctness. Finally, in
Section~\ref{sec:implem}, we discuss the implementation aspects, and illustrate
with the example of the Eckmann-Hilton cell how leveraging the mechanisation
available in \catt allow for defining complex terms in \hott.

\paragraph{Acknowledgements.} The author wants to thank Meven Lennon-Bertrand
for his helpful guidance on the internal of \coq, making the implementation of
the plugin possible, as well as Jamie Vicary, Ioannis Markakis, Chiara Sarti and
Wilfred Offord, for the interesting discussions around the work presented in
this paper.

\section{Overview of Martin-Löf type theory}
\label{sec:hott}
This section is dedicated to the presentation of the notation we use to work
with type theory, as well as the fragment of Martin-Löf type theory that we
place ourselves in and call \hott.

\subsection{Notations for type theories}
We work with dependent type theories, made out of contexts, types, terms and
substitutions, as well as definitional equalities. Our theories are presented
with an untyped syntax with propositional judgements witnessing well-formedness.
We use the following notation for these judgements:
\[
  \begin{cases}
  \ctxrule{\Gamma} & \text{\(\Gamma\) is a valid context} \\
  \typerule{\Gamma}{A} & \text{\(A\) is a valid type in \(\Gamma\)} \\
  \termrule{\Gamma}{u}{A} & \text{\(u\) is a term of type \(A\) in \(\Gamma\)}\\
  \subrule{\Delta}{\gamma}{\Gamma}& \text{\(\gamma\) is a substitution from
                                    \(\Gamma\) to \(\Delta\)}
  \end{cases}
\]
We call pre-contexts, pre-types, pre-terms and pre-substitutions the elements of
the untyped syntax, which are not necessarily well-formed, to contrast with
contexts, types terms and substitution which are implicitly assumed to be
well-formed. We denote \(u=v\) the syntactic equality between pre-terms \(u\)
and \(v\), and \(u\equiv v\) the definitional equality between them. Syntactic
equality compares the expressions whereas definitional equality can be subject
to equality rules. The empty context is denoted \(\emptycontext\), and
substitutions are written as \(\sub{x\mapsto u}\) where \(x\) is a variable and
\(u\) is the term the variable \(x\) is substituted for. In the rest of this
article, we introduce a three dependent type theories with which we work and we
assume that they enjoy all of the usual structure, making their syntax into a
category with families. This has already been shown for each of these, and more
details regarding this structure is given in Appendix~\ref{app:cwf}

\subsection{Martin-Löf type theory}
In this article, we consider a fragment of Martin-Löf type theory with universes
a la Tarski, identity types and \(\Pi\)-types. Since we are interested in the
structure of higher identity types, we call this theory homotopy type theory, or
\hott, even though our work does not require the univalence principle or higher
inductive types. For the sake of clarity, we avoid the use of de Bruijn level
and assume an countably infinite set of variables, that we denote
\(\hvar{x}, \hvar{y}\), \ldots We use the standard notations for the
\(\Pi\)-types as \(\Pi(x:A).B\) and \(\lambda\)-applications as
\(\lambda(x:A).u\).

Our construction does not involve size issues, so we may assume that we are
working with a universe of fixed size \(\Type\). The type constructor \(\El\)
produces a type out of a term of type \(\Type\), as given by the following rule
\[
  \inferdef{(\(\El\)-in)}{\termrule{\Gamma}{A}{\Type}}{\typerule{\Gamma}{\El(A)}}
  \label{rule:el}
\]
Given a type \(A\) and terms \(u,v\), we denote \(\Id_{A}(u,v)\) the identity
type, and given just \(A\) and \(u\), we denote \(\refl_{A,u}\) the reflexivity
term. The introductions rules for identity types and reflexevity are the
following:
\begin{mathpar}
  \inferdef{(\(\Id\)-in)}{\termrule{\Gamma}{u}{A} \\
  \termrule{\Gamma}{v}{A}}{\typerule{\Gamma}{\Id_{A}(u,v)}}\label{rule:id}
  \and
  \inferdef{(\(\refl\)-in)}{\termrule{\Gamma}{u}{A}}
  {\termrule{\Gamma}{\refl_{A,u}}{\Id_{A}(u,u)}}\label{rule:refl}
\end{mathpar}
We also use the Paulin-Mohring formulation of the \(\J\) constructor for
identity types~\cite{paulin-mohring_inductive_1993}, whose introduction rule is
given as follows:
\[
  \inferdef{(\(\J\)-in)}{
    \typerule{\Gamma}{A} \\
    \termrule{\Gamma}{u}{A}\\
    \typerule{\Gamma,x:A,y:A,e:\Id_{A}(x,y)}{P(x,y,e)} \\
    \termrule{\Gamma}{p}{P(u,u,\refl_{A,u})} }
  {\termrule{\Gamma}{\J(A,P,p)}{\Pi(v:A)(e:\Id_{A}(u,v)).P(u,v,e)}}\label{rule:j}
\]
The \(\J\) rule is subject to a computation rule. For the sake of simplicity, we
write this as the following linear untyped defintional equality:
\begin{equation}\tag{\(\eta_\J\)}
  \label{eq:eta-j}
  \J(A,u,P,p)\ v\ \refl_{B,w} \equiv p.
\end{equation}
Any application of \eqref{eq:eta-j} in a context where both sides are well-typed
must be instantiated with \(u,v,w\), as well as \(A,B\) being definitionally
equal, and thus coincide with the standard computation rule.

For any term \(u\) and any type \(A\) of \hott we define the \(\N\)-indexed
family of types \(\Id^{n}_{A}(u)\) and the \(\N\)-indexed family of terms
\(\refl^{n}_{A,u}\) by induction on \(n\) as follows:
\begin{align*}
  \Id^{0}_{A}(u) & \defeq A
  & \refl^{0}_{A,u} & \defeq u \\
  \Id^{n+1}_{A}(u)& \defeq \Id_{\Id^{n}_{A}(u)}(\refl^{n}_{A,u},\refl^{n}_{A,u})
  & \refl^{n+1}_{A,u} & \defeq \refl_{\Id^{n}_{A,u},\refl^{n}_{A,u}}
\end{align*}
These define valid types and terms in the theory \hott, in the following sense:
\begin{lemma}\label{lemma:iterated-refl}
  For every \(n\in \N\), the following rules are admissible in \hott:
  \begin{align*}
    \inferrule{\termrule{\Gamma}{u}{A}}{\typerule{\Gamma}{\Id^{n}_{A}(u)}}
    && \inferrule{\termrule{\Gamma}{u}{A}}
       {\termrule{\Gamma}{\refl^{n}_{A,u}}{\Id^{n}_{A,u}}}
  \end{align*}
\end{lemma}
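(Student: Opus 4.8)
The plan is to prove the two admissibility statements simultaneously by induction on \(n\). This simultaneity is forced by the shape of the definitions: the families \(\Id^{n}_{A}(u)\) and \(\refl^{n}_{A,u}\) are defined by mutual recursion, since \(\Id^{n+1}_{A}(u)\) is built out of \(\refl^{n}_{A,u}\). Consequently, establishing the type-formation rule at level \(n+1\) does not merely require that \(\Id^{n}_{A}(u)\) be a valid type, but also that \(\refl^{n}_{A,u}\) be a well-typed term of it. I would therefore carry as a single induction hypothesis the conjunction ``\(\typerule{\Gamma}{\Id^{n}_{A}(u)}\) and \(\termrule{\Gamma}{\refl^{n}_{A,u}}{\Id^{n}_{A}(u)}\)'' for every \(\Gamma\) and every \(\termrule{\Gamma}{u}{A}\).

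For the base case \(n = 0\), the definitions unfold to \(\Id^{0}_{A}(u) \defeq A\) and \(\refl^{0}_{A,u} \defeq u\), so both claims follow directly from the premise \(\termrule{\Gamma}{u}{A}\): this judgement entails \(\typerule{\Gamma}{A}\) by the standard presupposition available in any category with families, which is the type-formation claim, and it is literally the term-typing claim. For the inductive step, I would assume the hypothesis at level \(n\) and feed the judgement \(\termrule{\Gamma}{\refl^{n}_{A,u}}{\Id^{n}_{A}(u)}\) into the two introduction rules. Applying \ruleref{rule:id} with \(\refl^{n}_{A,u}\) placed in both argument positions yields \(\typerule{\Gamma}{\Id_{\Id^{n}_{A}(u)}(\refl^{n}_{A,u}, \refl^{n}_{A,u})}\), which is by definition \(\typerule{\Gamma}{\Id^{n+1}_{A}(u)}\). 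Applying \ruleref{rule:refl} to the same premise yields \(\termrule{\Gamma}{\refl_{\Id^{n}_{A}(u), \refl^{n}_{A,u}}}{\Id_{\Id^{n}_{A}(u)}(\refl^{n}_{A,u}, \refl^{n}_{A,u})}\), whose subject and type are by definition \(\refl^{n+1}_{A,u}\) and \(\Id^{n+1}_{A}(u)\). This closes the induction.

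I do not anticipate a genuine obstacle: the argument is a direct unfolding of the definitions against the two introduction rules, and there are no side conditions to discharge. The only point that demands care, and the reason a naive single-statement induction would stall, is that the two families must be handled together rather than sequentially, because the type at level \(n+1\) cannot be formed without the typing information about the reflexivity term at level \(n\).
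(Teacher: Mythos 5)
Your proof is correct and follows essentially the same route as the paper's: a simultaneous induction on \(n\), with the base case for the type handled by inversion/presupposition and the inductive step obtained by feeding the typing of \(\refl^{n}_{A,u}\) into \ruleref{rule:id} and \ruleref{rule:refl}. Your remark that the two statements must be proved together is a correct and useful observation that the paper leaves implicit.
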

\begin{proof}
  We prove this result by induction. The validity of the type \(\Id^{0}_{A}(u)\)
  is a standard inversion lemma, and the typing of the term \(\refl^{0}_{A,u}\)
  is exactly given by the premise of the rule. The validity of the type
  \(\Id^{n+1}_{A}(u)\) is obtained from \ruleref{rule:id} and the typing of the
  term \(\refl^{n}_{A,u}\). The typing of the term \(\refl^{n+1}_{A,u}\) is
  obtained from \ruleref{rule:refl} and the typing of the term
  \(\refl^{n}_{A,u}\).
\end{proof}

Given a valid type \(\typerule{\Gamma}{A}\) in \hott, we define its
\emph{\(\Pi\)-lifting} \(\Pi_{\Gamma}.A\) to be the iterated \(\Pi\)-type with
body \(A\) where all variables of \(\Gamma\) are successively bound by a \(\Pi\)
constructor. Similarly, given a valid term \(\termrule{\Gamma}{u}{A}\), we
define its \emph{\(\lambda\)-lifting} \(\lambda_{\Gamma}.u\) to be the iterated
\(\lambda\)-term with body \(u\) where all variables of \(\Gamma\) are
successively bound by a \(\lambda\)-constructor.
\begin{lemma}\label{lemma:lambda-lifting}
  The following rules are admissible in \hott:
  \begin{align*}
    \inferrule{\typerule{\Gamma}{A}}{\typerule{\emptycontext}{\Pi_{\Gamma}.A}}
    && \inferrule{\termrule{\Gamma}{u}{A}}
       {\termrule{\emptycontext}{\lambda_{\Gamma}.u}{\Pi_{\Gamma}.A}}
  \end{align*}
\end{lemma}
\begin{proof}
  The proof is a straightforward induction on the length of the context
  \(\Gamma\), applying successively the introduction rules for \(\Pi\)-types and
  \(\lambda\)-terms.
\end{proof}

\section{The dependent type theory CaTT}
\label{sec:catt}
In this section, we present the dependent types theories \gsett and \catt. These
are type-theoretic formulations of algebraic structures, namely globular sets
and weak \(\omega\)\nobreakdash-categories, using Cartmell's point of view of
dependent type theories as generalised algebraic
theories~\cite{cartmell_generalised_1986}. These type theories do not support
any of the usual constructions that one typically encounters with Martin-Löf
type theory, in particular they do not have universes, function types,
\(\Pi\)-types or \(\Sigma\)-types. The type theories \gsett and \catt have the
same type constructors, but \gsett has no term constructor, so that the only
terms in \gsett are variables, and \catt is an extension of \gsett with term
constructors.

\subsection{The type theory GSeTT}
The type theories \gsett and \catt have two type constructors \(\obj\) and
\(\arr[A]{u}{v}\) where \(A\) is a pre-type and \(u,v\) are pre-terms. The
introduction rules state that \(\obj\) is always a valid type, while
\(\arr[A]{u}{v}\) is subject to the same conditions as \(\Id_{A}(u,v)\).
Formally the rules are the following:
\begin{mathpar}
  \inferdef{(\(\obj\)-in)}{\ctxrule{\Gamma}}{\typerule{\Gamma}{\obj}}
  \label{rule:obj}
  \and
  \inferdef{(\(\arr{}{}\)-in)} {\termrule{\Gamma}{u}{A} \\
    \termrule{\Gamma}{v}{A}} {\typerule{\Gamma}{\arr[A]{u}{v}}}
  \label{rule:arr}
\end{mathpar}
We define by induction the dimension of a pre-type of \gsett or \catt as
follows:
\begin{align*}
  \dim \obj &= -1 & \dim \arr[A]uv = \dim A + 1
\end{align*}
By convention, the initialisation is at \(-1\), to be consistent with notations
used in topology. The dependent type theory \gsett is the theory with those two
type constructor, and no term constructors.

\paragraph{Intuition on the semantics.}
Contexts in the theory \gsett can be understood as a description of a structure
called globular sets, which is a generalisation of a graph, allowing
\(2\)-dimensional arrows between the arrows, \(3\)-dimensional arrows between
the \(2\)-dimensional ones, and so on. In globular sets, \(n\)-dimensional
arrows are called \(n\)-cells. A context in \gsett describes a finite globular
sets, by interpreting variables of type \(\obj\) as vertices, and variables of
an arrow type as arrows between the prescribed source and target. The following
illustrates an a few contexts and their corresponding globular sets, using a
pictorial representation of those:
\[
  \begin{array}{c@{\qquad\qquad}c}
    \left(
    \makecell{
    x:\obj,\\
    f : \arr[\obj]{x}{x}}
    \right)
    &
      \left(
      \makecell{
      x:\obj,y:\obj,z:\obj, \\
    f : \arr[\obj]{x}{y}, f':\arr[\obj]{x}{y},
    g:\arr[\obj]{z}{y}, \\
    \alpha:\arr[{\arr[\obj]{x}{y}}]{f}{f'}}
    \right)\\
    \begin{tikzcd}[ampersand replacement=\&]
      x\ar[loop,"f"']
    \end{tikzcd}
    &
      \begin{tikzcd}[ampersand replacement=\&]
        x
        \ar[r,"f", bend left]
        \ar[r, "f'"', bend right]
        \ar[r, phantom, "\Downarrow_{\alpha}"]
        \& y
        \& z \ar[l,"g"']
      \end{tikzcd}
  \end{array}
\]
A judgement \(\termrule{\Gamma}{\cvar x}{A}\) then corresponds to a choice of a
particular cell in the globular cell corresponding to \(\Gamma\), together with
its iterated sources and targets. Adopting the presheaf view on globular sets,
this is equivalent by the Yoneda lemma to a morphism of globular sets from
\(D^n\) to the globular set corresponding to \(\Gamma\), where \(D^{n}\) is the
walking \(n\)-cell globular set, also called \(n\)-dimensional disk. A more
in-depth exploration of this facts can be found in the works that directly deals
with the semantics of \catt~\cite{finster_typetheoretical_2017,
  benjamin_globular_2021}.

\subsection{The type theory CaTT}
The type theory \catt extends \gsett by adding term constructors. Intuitively
these term constructors correspond to adding the higher categories operations to
the globular sets described by the theory \gsett.

\paragraph{Ps-contexts and their source and target.}
The term constructors of the theory \catt are parameterised by a class of
contexts of \gsett that we call \emph{ps-contexts}. Those contexts are
recognised by a judgement denoted \(\psrule{\Gamma}\), which is itself dependent
on an auxiliary judgement \(\psauxrule{\Gamma}{x}{A}\) where \(x\) is a
variable and \(A\) a type in \gsett. The derivation rules for these judgements
are the following:
\begin{mathpar}
   \inferdef{(pss)}{\null}{\psauxrule{\Gamma}{x}{\obj}}
    \and
      \inferdef{(pse)}{\psauxrule{\Gamma}{x}{A}}
      {\psauxrule{\Gamma,y:A,f:\arr[A]{x}{y}}{f}{\arr[a]xy}} \\
    \inferdef{(psd)}{\psauxrule{\Gamma}{f}{\arr[A]xy}}{\psauxrule{\Gamma}{y}{A}}
    \and \inferdef{(ps)}{\psauxrule{\Gamma}{x}{\obj}}{\psrule{\Gamma}}
\end{mathpar}
Additionally, we also require that all variables bound by ps-contexts are
different. Intuitively, ps-contexts are the contexts of \gsett that do not
present holes, and have a global directionality.
\begin{lemma}
  \label{lemma:admissibility-ps}
  Ps-contexts are valid \gsett contexts. More precisely, the following rules are
  admissible in \gsett
  \begin{align*}
    \inferrule{\psrule{\Gamma}}{\ctxrule{\Gamma}}
    &&
       \inferrule{\psauxrule{\Gamma}{x}{A}}{\termrule{\Gamma}{x}{A}}
  \end{align*}
\end{lemma}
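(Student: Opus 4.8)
The plan is to establish both admissible rules at once by induction on the derivation of the auxiliary judgement \(\psauxrule{\Gamma}{x}{A}\), strengthening the second rule so that its conclusion also records context validity. Concretely, I would prove that whenever \(\psauxrule{\Gamma}{x}{A}\) holds one has both \(\ctxrule{\Gamma}\) and \(\termrule{\Gamma}{x}{A}\); carrying \(\ctxrule{\Gamma}\) explicitly is convenient for the induction, even though it is essentially subsumed by \(\termrule{\Gamma}{x}{A}\) presupposing a valid context. The first admissible rule is then a corollary: every derivation of \(\psrule{\Gamma}\) ends with an application of \textsc{(ps)} to some \(\psauxrule{\Gamma}{x}{\obj}\), and the strengthened statement immediately yields \(\ctxrule{\Gamma}\).

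For the induction I would case on the last rule applied. In the base case \textsc{(pss)} the context is the singleton \((x:\obj)\), which is valid by extending \(\emptycontext\) with the type \(\obj\) (valid by \ruleref{rule:obj}), and \(x\) is a term of type \(\obj\) by the variable rule. In the case \textsc{(pse)} the induction hypothesis gives \(\ctxrule{\Gamma}\) and \(\termrule{\Gamma}{x}{A}\), hence \(\typerule{\Gamma}{A}\); I would extend \(\Gamma\) by \(y:A\), use weakening together with the variable rule to obtain \(\termrule{\Gamma,y:A}{x}{A}\) and \(\termrule{\Gamma,y:A}{y}{A}\), form \(\arr[A]{x}{y}\) via \ruleref{rule:arr}, extend once more by \(f:\arr[A]{x}{y}\), and read off \(\termrule{\Gamma,y:A,f:\arr[A]{x}{y}}{f}{\arr[A]{x}{y}}\) from the variable rule. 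In the case \textsc{(psd)} the induction hypothesis gives \(\termrule{\Gamma}{f}{\arr[A]{x}{y}}\); since this presupposes that \(\arr[A]{x}{y}\) is a valid type in \(\Gamma\), inverting \ruleref{rule:arr} extracts \(\termrule{\Gamma}{y}{A}\), which is exactly the required conclusion.

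All of these steps rely only on the structural apparatus the paper assumes available as part of the category-with-families structure of \gsett, namely context extension, weakening, the variable rule, and inversion for the arrow former. The main obstacle I anticipate is the \textsc{(psd)} case: it is the only rule whose conclusion concerns a variable (\(y\)) introduced earlier rather than the one just added, so its typing cannot be read off from a context-extension step and must instead be recovered by inverting the arrow-formation rule on the type of \(f\). Making this inversion precise, and checking that the type \(A\) it returns is the one appearing in the conclusion, is the only point requiring genuine care; the remaining cases are routine applications of weakening and the variable rule.
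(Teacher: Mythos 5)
Your proposal is correct: the paper itself does not spell out a proof (it defers to the reference \cite{benjamin_globular_2021}), and your mutual induction on the derivation of \(\psauxrule{\Gamma}{x}{A}\), with the \textsc{(psd)} case handled by presupposition plus inversion of \ruleref{rule:arr}, is exactly the standard argument given there. The only point worth flagging is that you (correctly) read the base rule \textsc{(pss)} as introducing the singleton context \((x:\obj)\) rather than an arbitrary \(\Gamma\), which is the intended reading of the paper's slightly abbreviated statement of that rule.
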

This result proven in~\cite{benjamin_globular_2021} shows that ps-context
correspond to a particular class of globular sets. They can be recognised by the
existence of a total order on their cells~\cite{weber_generic_2004,
  finster_typetheoretical_2017}, and correspond to situations that can be
contracted in a directed sense. For instance the following contexts are
ps-contexts (illustrated here with their corresponding globular set):
\[
  \begin{array}{c@{\qquad\qquad\qquad}c}
    \left(
    \makecell[l]{
    x:\obj,y:\obj,f:\arr[\obj]{x}{y},\\
    z:\obj,g:\arr[\obj]{y}{z}
    }
    \right)
    &
      W\defeq
      \left(
      \makecell[l]{
      x:\obj,y:\obj,f:\arr[\obj]{x}{y},\\
    g:\arr[\obj]{x}{y},a:\arr[{\arr[\obj]{x}{y}}]{f}{g},\\
    z:\obj,h:\arr[\obj]{y}{z}
    }
    \right)
    \\
    \begin{tikzcd}[ampersand replacement=\&]
      x
      \ar[r, "f"]
      \& y \ar[r, "g"]
      \& z
    \end{tikzcd}
    &
    \begin{tikzcd}[ampersand replacement=\&]
      x
      \ar[r, bend left, "f"]
      \ar[r, bend right, "g"']
      \ar[r, phantom, "\Downarrow_{a}"]
      \& y \ar[r, "h"]
      \& z
    \end{tikzcd}
  \end{array}
\]
Given a ps-context \(\Gamma\) and an integer \(i\), we define two sub-contexts
\(\partial_{i}^{-}\Gamma\) and \(\partial_{i}^{+}\Gamma\) called respectively
the \(i\)-source and the \(i\)-target of \(\Gamma\) as follows:
\begin{align*}
  \partial_{i}^{-}(x:\obj)
  & \defeq (x:\obj)\\
  \partial_{i}^{-}(\Gamma,y:A,f:\arr[A]{x}{y})
  & \defeq
    \begin{cases}
      \partial_{i}^{-}\Gamma & \text{if \(\dim A + 1 \geq i\)} \\
      (\partial_{i}^{-}\Gamma,y:A,f:\arr[A]xy) & \text{otherwise}
    \end{cases} \\
  \partial_{i}^{+}(x:\obj)
  & \defeq (x:\obj)\\
  \partial_{i}^{+}(\Gamma,y:A,f:\arr[A]{x}{y})
  & \defeq
    \begin{cases}
      \partial_{i}^{-}\Gamma & \text{if \(\dim A + 1 > i\)} \\
      (\tail(\partial_{i}^{-}\Gamma),y:A) & \text{if \(\dim A + 1 > i\)} \\
      (\partial_{i}^{-}\Gamma,y:A,f:\arr[A]xy) & \text{otherwise}
    \end{cases}
\end{align*}
where \(\tail\) is the operator that removes the top first element of the list.
This definition illustrates an important principle when reasoning on
ps-contexts, only allow to start with a one-element list and add two elements at
one time, performing structural induction on a derivation that a context is a
pasting scheme amounts to reasoning by induction on the pasting scheme seen as
an odd-sized list. This is because there exists at most one derivation than a
given context is a ps-context~\cite{benjamin_formalization_2021}. We illustrate
the sources and targets of the ps-context \(W\) introduced above:
\[
  \begin{array}{c@{\qquad}c@{\qquad}c@{\qquad}c}
     \partial^{-}_{1}W
    & \partial^{+}_{1}W
    & \partial^{-}_{0}W
    & \partial^{+}_{0}W \\
      \left(
      \makecell[l]{
      x:\obj,y:\obj,f:\arr[\obj]{x}{y},\\
    z:\obj,h:\arr[\obj]{y}{z}
    }
    \right)
    &
      \left(
      \makecell[l]{
      x:\obj,y:\obj,g:\arr[\obj]{x}{y},\\
    z:\obj,h:\arr[\obj]{y}{z}
    }
    \right)
    &
      \left(
    \makecell[l]{x:\obj}
    \right)
    &
      \left(
      \makecell[l]{
      z:\obj
    }
      \right)
    \\
    \begin{tikzcd}[ampersand replacement=\&]
      x
      \ar[r, bend left, "f"]
      \& y \ar[r, "h"]
      \& z
    \end{tikzcd}
    &
      \begin{tikzcd}[ampersand replacement=\&]
        x
        \ar[r, bend right, "g"']
        \& y \ar[r, "h"]
        \& z
      \end{tikzcd}
    &
      \begin{tikzcd}[ampersand replacement=\&]
        x
      \end{tikzcd}
    &
      \begin{tikzcd}[ampersand replacement=\&]
        z
      \end{tikzcd}
  \end{array}
\]
For any \(i\geq 2\), the \(i\)-sources and \(i\)-targets of \(W\) are equal to
\(W\). This is a generic fact: Given a ps-context \(\Gamma\) and
\(i\geq \dim \Gamma\), the \(i\)-sources and targets of \(\Gamma\) are equal to
\(\Gamma\).

\paragraph{Term constructors.}
The theory \catt has a family of term constructors \(\coh_{\Gamma,A}\), where
\(\Gamma\) is a ps-context and \(A\) is a type. Not all pairs \((\Gamma,A)\) of
contexts and types yield a valid term constructor, so we introduce a judgement
\(\cohrule{\Gamma}{A}\) to signify when this is the case. We sometimes call the
\emph{coherences} the valid term constructors \(\cohrule{\Gamma}{A}\), and the
following derivation rule characterises these coherences, for
\(i\in\{\dim \Gamma-1,\dim \Gamma\}\):
\begin{mathpar}
  \inferdef{(\(\coh\)-wd)}{
  \psrule{\Gamma} \\
  \termrule{\partial_{i}^{-}\Gamma}{u}{A} \\
  \termrule{\partial_{i}^{+}\Gamma}{v}{A}
  }
  {\cohrule{\Gamma}{\arr[A]{u}{v}}}
  \label{rule:coh-wd}
     \begin{cases}
       \Var(u:A) &= \Var(\partial_{i}^{-}\Gamma) \\
       \Var(v:A) &= \Var(\partial_{i}^{+}\Gamma).
     \end{cases}
\end{mathpar}
Here, \(\Var(u:A)\) denotes the union of all variables used by the term \(u\)
and the type \(A\), and \(\Var \Gamma\) denotes all variables bound in context
\(\Gamma\). The term constructors are subject to the following introduction
rule:
\[
  \inferdef{(\(\coh\)-in)}{\cohrule{\Gamma}{A} \\
    \subrule{\Delta}{\gamma}{\Gamma}}
  {\termrule{\Delta}{\coh_{\Gamma,A}[\gamma]}{A[\gamma]}}
  \label{rule:coh}
\]

\begin{remark}
  Some presentations of \catt use two different term constructors or two
  different introduction rules for \(\coh\), and sometimes \ruleref{rule:coh-wd}
  and \ruleref{rule:coh} are combined into a single rule. It is straightforward
  that those are equivalent, and the presentation we give here is better suited
  for the work in this article.
\end{remark}

\subsection{Implementation and intuition on the semantics}
We have implemented in \ocaml a type checker for the theory \catt, that we also
call \catt\footnote{\url{https://www.github.com/thibautbenjamin/catt}}. This plays the
role of a proof assistant specifically dedicated to working with a mathematical
structure called \(\omega\)-categories~\cite{maltsiniotis_grothendieck_2010}. In
this proof assistant the user may declare the coherence \(\coh_{\Gamma,A}\) with
the following syntax
\begin{verbatim}
coh [name] [ps-context] : [type-expr]
\end{verbatim}
and define the term \(\termrule{\Gamma}{u}{A}\) using one of the following
syntax
\begin{verbatim}
let [name] [context] = [body]
let [name] [context] : [type-expr] = [body]
\end{verbatim}
where \verb|[name]| is a string representing a name given to the coherence of
the term, \verb|[context]| and \verb|[ps-context]| are description of the
context \(\Gamma\), \verb|type-expr| is the type expression representing \(A\)
and \verb|body| is the term expression representing \(t\). For instance,
consider the following ps-context (represented with the corresponding diagram)
and coherence (where the output type), representing the composition of
\(1\)-cells
\[
  \begin{array}{c@{\qquad\qquad}c}
    \Gamma_{2} \defeq \left(
    \makecell{
    x:\obj,y:\obj,f:\arr[\obj]xy, \\
    \phantom{x:\obj,} z:\obj, g:\arr[\obj]yz}
    \right)
    &
      \comp\defeq \coh_{\Gamma_{2},\arr[\obj]xz}
  \end{array}
\]
This coherence can be declared with the name \verb|comp| in the proof-assistant
\catt using the following input:
\begin{verbatim}
coh comp (x y : *) (f : x -> y) (z : *) (g : y -> z) : x -> z
\end{verbatim}
In the proof arguments, contexts are declared as list of parenthesised items, and
in the type \(\arr[A]uv\), the type \(A\) is always left implicit and inferred
from \(u\) and \(v\). Consider now the following context together with the term,
representing the way to compose three \(1\)-cells together by first composing
the first two then composing the result with the last cell
\[
  \begin{array}{c@{\qquad\qquad}c}
    \Gamma_{3} \defeq \left(
    \makecell{
    x:\obj,y:\obj,f:\arr[\obj]xy, \\
    \phantom{x:\obj,} z:\obj, g:\arr[\obj]yz\\
    \phantom{x:\obj,} w:\obj, g:\arr[\obj]zw
    }
    \right)
    &
    \termrule{\Gamma_{3}}{\lcomp}{\arr[\obj]xw}
    \\[1.5em]
    \multicolumn{2}{c}{
    \lcomp\defeq \comp\left[\left\langle
    \makecell[l]{x\mapsto x, y\mapsto z, f \mapsto \comp\left[\left\langle
    \makecell[l]{x\mapsto x, y\mapsto y, f\mapsto f,\\
    {\phantom{x\mapsto x,}} z\mapsto z, g\mapsto g}
    \right\rangle\right],\\
    \phantom{x\mapsto x,} z\mapsto w, g\mapsto h}
    \right\rangle\right]
    }
  \end{array}
\]
This term can be declared in the proof assistant \catt with the following input:
\begin{verbatim}
let lcomp (x y z w : *) (f : x -> y) (g : y -> z) (h : z -> w)
    = comp (comp f g) h
\end{verbatim}
In this declaration, since the context need not be a ps-context, we grouped
together all variables of type \(\obj\) for simplification. This also
illustrates that substitutions are written like applications, and that a lot of
arguments are left implicit. The proof assistant \catt can automatically infer
which arguments should be implicit, matching the usual notational conventions
where the domains and codomains of functions are implicit. As hinted by these
examples, terms in \catt correspond to operations on weak \(\omega\)-categories.
Coherences represent primitive operations while terms are cells obtained by
using those primitive operations. Here we use the word operation in a liberal
sense. Consider the following coherence
\begin{verbatim}
coh assoc (x : * ) (y : *) (f : x -> y)
                   (z : *) (g : x -> y)
                   (w : *) (h : y -> z)
    : comp (comp f g) h -> comp f (comp g h)
\end{verbatim}
This coherence define a cell known as the associator, which witnesses that the
composition is associative, but in a weak sense. That is, the two terms
\verb|comp (comp f g) h| and \verb|comp f (comp g h)| are not definitionally
equal, but related by this higher cell, which happens to be invertible in some
appropriate sense~\cite{cheng_ocategory_2007, fujii_oweak_2024,
  benjamin_invertible_2024}.

A formal connection between the semantics of \catt and the theory of weak
\(\omega\)-categories has been established~\cite{benjamin_globular_2021},
showing that \catt allows one to work directly in the theory of weak
\(\omega\)-categories. Since types in \hott are weak
\(\omega\)-groupoids~\cite{lumsdaine_weak_2010, vandenberg_types_2011,
  altenkirch_syntactical_2012}, they are in particular weak
\(\omega\)-categories. The next two section are dedicated to expanding in the
connection between \hott and \catt, by showing that \hott is a syntactic model
of \catt, that is, by defining a translation from the syntax of \hott to that of
\catt and showing that this translation preserves the typing judgements.

\section{Translation of GSeTT terms into HoTT}
\label{sec:gsett-to-hott}
Before presenting the translation of \catt terms into \hott, we present that of
\gsett terms into \hott. Not only this presentation is useful as a buildup
towards the translation of \catt terms in a simpler setting, it is also directly
used by that translation, as the coherence rule of \catt terms depend on
derivation in \gsett through ps-contexts. From now on, in order to simplify the
notations, we assume the following on variables:
\begin{itemize}
\item Variables names in \gsett and in \catt are the same
\item Every variable name in \catt or \gsett is also a variable name in \hott
\item There are variable names in \hott that are not valid names in \catt.
\end{itemize}
These assumptions are possible: if we assume both sets of variables to be
countably infinite, there exists an non-bijective injection from variable names
in \catt to variables names of \hott. By convention we use the font \(\cvar{x}\)
to denote a variable in \gsett or \catt and the font \(\hvar{x}\) to denote a
variable in \hott, and we use the name of the variable to keep track of the
names.

We consider a term judgement \(\termrule{\Gamma}{\cvar{x}}{A}\) in \gsett. Our
translation interprets the variable \(\cvar{x}\) as a closed term representing
the projection of the interpretation of the context \(\Gamma\) onto the variable
\(\hvar{x}\). The context \(\Gamma\) is understood as describing the arguments
of the function, over a generic type \(\hvar{B}\), where variables of type
\(\obj\) represent arguments of type \(\hvar{B}\) and variables of type
\(\arr[]{\cvar{y}}{\cvar{z}}\) represent equalities between the corresponding
variables. This translation is best understood by considering a few simple
examples. Figure~\ref{fig:translation-vars} illustrates it on a few \gsett
terms, displayed with their corresponding closed \(\lambda\)-terms.
\begin{figure}
  \centering
  \(
  \begin{array}{|c|c|}
    \hline
    \text{\gsett term in context}
    & \text{closed \(\lambda\)-term} \\
    \hline
    \termrule{\cvar{x} : \obj}{\cvar{x}}{\obj}
    & \lambda (\hvar{B}:\Type) ({\hvar{x}}:\hvar{B}). x\\
    \termrule{\cvar{x} : \obj,\cvar{f}:\arr{\cvar{x}}{\cvar{x}}}{\cvar{f}}{\arr{\cvar{x}}{\cvar{x}}}
    & \lambda (\hvar{B}:\Type) (\hvar{x}:\hvar{B})(\hvar{f} : \Id_{\hvar{B}}(\hvar{x},\hvar{x})). \hvar{f}\\
    \termrule{\cvar{x}:\obj,\cvar{y}:\obj,\cvar{f}:\arr{\cvar{x}}{\cvar{y}}}{\cvar{y}}{\obj}
    & \lambda (\hvar{B}:\Type) (\hvar{x}:\hvar{B})(\hvar{y}:\hvar{B})(\hvar{f} : \Id_{\hvar{B}}(\hvar{x},\hvar{y})). \hvar{y}\\
    \hline
  \end{array}
  \)
  \caption{Translation from \gsett terms to \hott}
  \label{fig:translation-vars}
\end{figure}
As illustrated by this figure, when no term constructor is involved, all the
functions resulting from the translation are projections, they just return one
of their arguments unchanged and discard all the other arguments. Although this
case does not produce interesting functions, it lets us illustrates a few
principles about the translation. First, it shows how the type of objects
\(\obj\) is translated into a reference to a universally quantified type
\(\hvar{B}\), so the functions resulting from the translations must take
\(\hvar{B}\) as their first argument. Secondly, it shows how a \gsett context
can be understood as describing an sequence of arguments that are all of type
\(\hvar{B}\) or of an higher identity type above those.

To define the translation, we first fix a variable \(\hvar{B}\) in \hott whose
not is not a valid name in \gsett, and define translation functions
\(\interp{\_}_{\hvar{B}}\), which associate to a context (resp.~a type) of
\catt, a context (resp.~a type) in \hott. These function are defined by
induction as follows:
\begin{align*}
  \begin{cases}
    \interp{\emptycontext}_{\hvar{B}} & \defeq (\hvar{B}:\Type) \\
    \interp{(\Gamma.\cvar{x}:A)}_{\hvar{B}} & \defeq (\interp{\Gamma}_{\hvar{B}}.\hvar{x}:\interp{A}_{\hvar{B}})
  \end{cases}
  &&
    \begin{cases}
      \interp{\obj}_{\hvar{B}} & \defeq \El(\hvar{B}) \\
      \interp{\arr[A]{\cvar{x}}{\cvar{y}}}_{\hvar{B}} & \defeq \Id_{\interp{A}_{\hvar{B}}}(\hvar{x},\hvar{y})
    \end{cases}
\end{align*}
\begin{restatable}{lemma}{translationgsett}
\label{lemma:translation-gsett}
  Translating derivable contexts, types and variables in \gsett yields derivable
  contexts, types and variables in \hott. More precisely, we have:
  \begin{itemize}
  \item For every derivable context judgement \(\ctxrule{\Gamma}\) in \gsett,
    the judgement \(\ctxrule{\interp{\Gamma}_{\hvar B}}\) is derivable in \hott.
  \item For any type judgement \(\typerule{\Gamma}{A}\) derivable in \gsett, the
    judgement \(\typerule{\interp{\Gamma}_{\hvar{B}}}{\interp{A}_{\hvar{B}}}\) is
    derivable in \hott.
  \item For any term judgement \(\termrule{\Gamma}{\cvar{x}}{A}\) derivable in
    \gsett, the judgement
    \(\termrule{\interp{\Gamma}_{\hvar{B}}}{\hvar{x}}{\interp{A}_{\hvar{B}}}\)
    is derivable in \hott
  \end{itemize}
\end{restatable}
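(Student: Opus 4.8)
The plan is to prove the three statements simultaneously by mutual induction on the derivation of the \gsett judgment. Mutual induction is forced by the three-way dependency between the judgments of \gsett: the context-extension rule invokes a type judgment, \ruleref{rule:arr} invokes term judgments, and the variable rule invokes a context judgment. Before splitting into cases, I would record one invariant, immediate from the defining equations of \(\interp{-}_{\hvar B}\): every translated context has \(\hvar B:\Type\) as its leftmost declaration, and a declaration \(\cvar x:A\) occurring in \(\Gamma\) translates verbatim into a declaration \(\hvar x:\interp{A}_{\hvar B}\) in \(\interp{\Gamma}_{\hvar B}\), since type translation is purely syntactic and context-independent.

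For the context judgments, the base case is the empty context, whose translation \((\hvar B:\Type)\) is well-formed because \(\Type\) is a type over the empty context. For a context extension \(\Gamma.\cvar x:A\), the inductive hypotheses give that \(\interp{\Gamma}_{\hvar B}\) is a valid context and \(\interp{A}_{\hvar B}\) a valid type over it, and the context-extension rule of \hott closes the case.

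For the type judgments there are two cases. When the type is \(\obj\), produced by \ruleref{rule:obj}, the inductive hypothesis gives that \(\interp{\Gamma}_{\hvar B}\) is a valid context, and since \(\hvar B:\Type\) is its leftmost declaration the variable rule yields \(\termrule{\interp{\Gamma}_{\hvar B}}{\hvar B}{\Type}\); \ruleref{rule:el} then gives \(\typerule{\interp{\Gamma}_{\hvar B}}{\El(\hvar B)}\). When the type is \(\arr[A]{\cvar x}{\cvar y}\), produced by \ruleref{rule:arr} from \(\termrule{\Gamma}{\cvar x}{A}\) and \(\termrule{\Gamma}{\cvar y}{A}\), the term inductive hypotheses give \(\termrule{\interp{\Gamma}_{\hvar B}}{\hvar x}{\interp{A}_{\hvar B}}\) and similarly for \(\hvar y\), and \ruleref{rule:id} produces exactly \(\Id_{\interp{A}_{\hvar B}}(\hvar x,\hvar y)=\interp{\arr[A]{\cvar x}{\cvar y}}_{\hvar B}\). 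The term judgments have only the variable rule: from \(\termrule{\Gamma}{\cvar x}{A}\) the context inductive hypothesis gives \(\interp{\Gamma}_{\hvar B}\) valid, and by the verbatim-translation invariant the declaration \(\hvar x:\interp{A}_{\hvar B}\) belongs to it, so the \hott variable rule concludes.

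I expect the only point requiring genuine care to be the treatment of \(\obj\): since the translation replaces it with \(\El(\hvar B)\), the argument must exhibit \(\hvar B\) as a term of type \(\Type\) in every translated context. This is exactly where the standing assumption that \(\hvar B\) is fresh for all \gsett variable names is used: freshness guarantees that no context extension shadows \(\hvar B\), so it remains the base declaration and is always accessible through the variable rule. Once this invariant is isolated, the remaining cases are direct applications of the corresponding \hott formation rules and involve no real computation.
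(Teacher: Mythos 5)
Your proof is correct and follows essentially the same route as the paper's: a mutual induction on the derivation, with the key observation that the freshness of \(\hvar B\) guarantees the association \(\hvar B:\Type\) is never shadowed, so that \ruleref{rule:var} followed by \ruleref{rule:el} handles the \(\obj\) case, and the verbatim translation of declarations (the paper phrases this as \(\assoc(\interp{\Gamma}_{\hvar B},\hvar x)=\interp{A}_{\hvar B}\)) handles variables. No gaps.
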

\begin{proof}
  This can be proved by induction on the derivation tree of the judgement in the
  theory \gsett. See Appendix~\ref{app:gsett-to-hott} for a complete proof.
\end{proof}

Given a derivable type judgement \(\typerule{\Gamma}{A}\) in \gsett, we define
the closed type \(\interp{\typerule{\Gamma}{A}}\) as the \(\Pi\)-lifting of the
type judgement \(\typerule{\interp{\Gamma}_{\hvar{B}}}{\interp{A}_{\hvar{B}}}\)
in \hott. Similarly, for a derivable term judgement
\(\termrule{\Gamma}{\cvar x}{A}\) in \gsett, we define the closed term
\(\interp{\termrule{\Gamma}{\cvar{x}}{A}}\) to be the \(\lambda\)-lifting of the
derivable term judgement
\(\termrule{\interp{\Gamma}_{\hvar{B}}}{\hvar{x}}{\interp{A}_{\hvar{B}}}\) in
\hott.
\begin{align*}
  \interp{\typerule{\Gamma}{A}}
  & \defeq \Pi_{\interp{\Gamma}_{\hvar B}}.\interp{A}_{\hvar B} &
  \interp{\termrule{\Gamma}{\cvar x}{A}}
  & \defeq \lambda_{\interp{\Gamma}_{B}}.\hvar x
\end{align*}
The choice of the variable \(\hvar{B}\) is irrelevant as any two choices lead to
\(\alpha\)-equivalent terms, hence the variable disappears from the notation.
\begin{proposition}
  Consider a judgement \(\termrule{\Gamma}{\cvar{x}}{A}\) derivable in \catt
  where no term constructor appears. Then the closed \(\lambda\)-term
  \(\interp{\termrule{\Gamma}{\cvar{x}}{A}}\) is valid in \hott. More precisely,
  the following judgement is derivable in \hott:
  \[
    \termrule{\emptycontext} {\interp{\termrule{\Gamma}{\cvar{x}}{A}}}
    {\interp{\typerule{\Gamma}{A}}}.
  \]
\end{proposition}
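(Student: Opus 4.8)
The plan is to observe that the hypothesis ``no term constructor appears'' turns the given \catt judgement into a judgement of \gsett, and then to chain together the two admissibility results already established: Lemma~\ref{lemma:translation-gsett} to translate the judgement into \hott, and the $\lambda$-lifting part of Lemma~\ref{lemma:lambda-lifting} to discharge the context into a closed term. No new typing machinery is needed; the proposition is essentially the composite of these two lemmas, specialised to the term-constructor-free fragment.

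First I would make precise the reduction to \gsett. Since \catt extends \gsett solely by adding the term constructors $\coh_{\Gamma,A}$, the only pre-terms containing no term constructor are variables, so $\cvar{x}$ is a variable. Moreover \gsett and \catt share exactly the same type constructors $\obj$ and $\arr[A]{u}{v}$ (rules \ruleref{rule:obj} and \ruleref{rule:arr}), as well as the same context- and variable-formation rules. Therefore a derivation of $\termrule{\Gamma}{\cvar{x}}{A}$ in which no term constructor appears --- neither in the term $\cvar{x}$ nor anywhere inside $\Gamma$ or $A$ --- can only invoke rules that already belong to \gsett. I would argue this formally by induction on the derivation, checking that \ruleref{rule:coh} is the unique rule of \catt absent from \gsett and that it cannot occur under the stated hypothesis. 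Hence $\termrule{\Gamma}{\cvar{x}}{A}$ is in fact derivable in \gsett.

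With this in hand, Lemma~\ref{lemma:translation-gsett} applies directly: from the \gsett judgement $\termrule{\Gamma}{\cvar{x}}{A}$ we obtain the derivable \hott judgement $\termrule{\interp{\Gamma}_{\hvar{B}}}{\hvar{x}}{\interp{A}_{\hvar{B}}}$. Applying the $\lambda$-lifting rule of Lemma~\ref{lemma:lambda-lifting} to this judgement then yields
\[
  \termrule{\emptycontext}{\lambda_{\interp{\Gamma}_{\hvar{B}}}.\hvar{x}}{\Pi_{\interp{\Gamma}_{\hvar{B}}}.\interp{A}_{\hvar{B}}}.
\]
It remains only to unfold the definitions of $\interp{\termrule{\Gamma}{\cvar{x}}{A}}$ and $\interp{\typerule{\Gamma}{A}}$, which are respectively $\lambda_{\interp{\Gamma}_{\hvar{B}}}.\hvar{x}$ and $\Pi_{\interp{\Gamma}_{\hvar{B}}}.\interp{A}_{\hvar{B}}$, to recognise this as precisely the target judgement.

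I do not anticipate a serious obstacle, since both nontrivial ingredients are already proved; the only genuine content is the first step, the syntactic observation that a term-constructor-free \catt judgement is a \gsett judgement. The mild care required there is to confirm that the absence of term constructors propagates through the \emph{entire} derivation --- in particular that $\Gamma$ and $A$ cannot covertly depend on a coherence --- which is immediate once one notes that \ruleref{rule:coh} is the unique point of divergence between the two theories.
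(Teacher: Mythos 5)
Your proposal is correct and follows exactly the paper's argument: the paper's proof is a one-line appeal to Lemma~\ref{lemma:translation-gsett} followed by Lemma~\ref{lemma:lambda-lifting}, and your additional care in checking that a term-constructor-free \catt derivation is a \gsett derivation only makes explicit a step the paper leaves implicit.
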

\begin{proof}
  This is a consequence of Lemmas~\ref{lemma:lambda-lifting}
  and~\ref{lemma:translation-gsett}
\end{proof}

\section{Translation of CaTT terms into HoTT}
\label{sec:catt-to-hott}
In this section, we extend our translation to a translation from \catt term to
closed \(\lambda\)-term in \hott. Intuitively, a \catt term in a context is
interpreted in \hott as a function taking arguments as described by the \catt
context, and returning a value determined by the term. As in the previous
section, we start by defining functions that perform the translation in
contexts, and then we take the \(\lambda\)-lifting. From now on, we consider a
variable \(\hvar B\) that is a variable in \hott which is not a name in \catt.
The translation functions \(\interp{\_}_{\hvar{B}}\) are now more numerous and
more complex. Not only do we consider translations of contexts and types, but
also of terms, substitutions and coherences. The translation function on
coherences sends a coherence of \catt onto a closed term in \(\hott\). These
functions are all defined by mutual induction. The cases for the contexts,
types, terms and substitutions are straightforward by induction and given by the
following:
\begin{align*}
  & \begin{cases}
    \interp{\emptycontext}_{\hvar{B}}
    & \defeq (\hvar{B}:\Type) \\
    \interp{(\Gamma.\cvar{x}:A)}_{\hvar{B}}
    & \defeq (\interp{\Gamma}_{\hvar{B}}.\hvar{x}:\interp{A}_{\hvar B})
  \end{cases}
  &&
    \begin{cases}
      \interp{\obj}_{\hvar{B}}
      & \defeq \El(\hvar{B}) \\
      \interp{\arr[A]{t}{u}}_{\hvar{B}}
      & \defeq \Id_{\interp{A}_{\hvar{B}}}(\interp{t}_{\hvar{B}},\interp{u}_{\hvar{B}})
    \end{cases} \\
  &\begin{cases}
    \interp{\cvar{x}}_{\hvar{B}}
    & \defeq \hvar{x} \\
    \interp{\coh_{\Gamma,A}[\gamma]}_{\hvar{B}}
    & \defeq (\interp{\coh_{\Gamma,A}}_{\hvar{B}})\left[\interp{\gamma}_{\hvar{B}}\right]
  \end{cases}
  &&
     \begin{cases}
       \interp{\sub{}}_{\hvar{B}}
       & \defeq \sub{\hvar{B}\mapsto \hvar{B}} \\
       \interp{\sub{\gamma,\cvar{x}\mapsto t}}_{\hvar{B}}
       & \defeq \sub{\interp{\gamma}_{\hvar{B}},\hvar{x}\mapsto\interp{t}_{\hvar{B}}}.
     \end{cases}
\end{align*}
The definition of \(\interp{\coh_{\Gamma,A}}_{\hvar{B}}\) is left in order to
complete the mutual induction. We define this case as
\(\interp{\coh_{\Gamma,A}}_{\hvar B}\defeq \elim_{\hvar
  B}(\Gamma,A,\Gamma,\id_{\interp{\Gamma}_{\hvar B}})\), where \(\elim\) is the
auxiliary function defined mutually inductively, by induction on the derivation
of the ps-context:
\[
  \begin{cases}
    \elim_{\hvar B}((\cvar x : \obj),A,\Gamma,\gamma)
    & \defeq \refl_{\hvar B,\hvar x}^{\dim A + 1} \\
    \elim_{\hvar B}((\Delta,\cvar y:C,\cvar{f}:\arr[C]{\cvar x}{\cvar y}), A,\Gamma,\gamma)
    & \defeq
      \J({\interp{C}_{\hvar B}},
      \hvar x,
      P(\hvar{x'},\hvar{y'},\hvar{f'}),
      p)\
      \hvar y\
      \hvar f
  \end{cases}
\]
where \(P\) and \(p\) are respectively defined as follows
\begin{align*}
  P(\hvar{x'},\hvar{y'},\hvar{f'})
  & \defeq \interp{A}_{\hvar B} [\gamma]
    [\sub{\id_{\interp{(\Delta,\cvar y : C,\cvar f : \arr[C]{\cvar x}{\cvar
    y})}_{\hvar B}}, \hvar{x}\mapsto \hvar{x'},\hvar{y}\mapsto\hvar{y'},\hvar{f}\mapsto\hvar{f'}}] \\
  p
  &\defeq \elim_{\hvar B}(\Delta,A,\Gamma,\gamma\circ
    \sub{\id_{\interp{\Delta}_{\hvar B}}, \hvar y\mapsto \hvar x,
                  \hvar f \mapsto \refl_{\interp{C}_{\hvar B},\hvar x}}).
\end{align*}
Intuitively, in the expression \(\elim(\Delta,A,\Gamma,\gamma)\), the context
\(\Gamma\) is meant to represent the ambiant ps-context we started from,
\(\Delta\) is meant to represent a sub-context of \(\Gamma\) obtained by
successively peeling off variables from \(\Gamma\), and \(\gamma\) a
substitution mapping variables of \(\interp{\Gamma}_{\hvar B}\) to variables of
\(\interp{\Delta}_{\hvar B}\) or reflexivity witnesses of such.

In this mutually inductive definition, all the inductive are strictly
decreasing, except in the case for \(\elim\) where in the inductive the
structural height of the type argument \(A\) is constant, but the first argument
is strictly decreasing. The induction is thus well founded, and proceeds by
structural induction on the syntax, and upon hitting a term constructor
\(\coh_{\Gamma,A}\) temporarily switches to an induction on the length of
\(\Gamma\) before resuming the structural induction. This inductive scheme is
also used for the proof of correctness of
Proposition~\ref{prop:translation-catt}. Note that in the inductive case
defining \(\elim\), the type \(C\) is the type of a ps-context, and thus a type
in \gsett, so the expression \(\interp{C}_{\hvar B}\) denotes the previously
defined translation on \gsett and it not an inductive call.

\begin{restatable}{lemma}{translationfunctor}
\label{lemma:translation-functor}
  The proposed translation scheme respects substitution application at the level
  of the pre-syntax. More precisely, the following results hold:
  \begin{itemize}
  \item For all pre-type \(A\) and all pre-substitution \(\gamma\) in \catt, we
    have the following equality of pre-types in \hott:
    \(\interp{A[\gamma]}_{\hvar B} = \interp{A}_{\hvar
      B}\left[\interp{\gamma}_{\hvar B}\right]\).
  \item For all pre-term \(t\) and all pre-substitution \(\gamma\) in \catt, we
    have the following equality of pre-terms in \hott:
    \(\interp{t[\gamma]}_{\hvar B} = \interp{t}_{\hvar
      B}\left[\interp{\gamma}_{\hvar B}\right]\).
  \item For all pairs of pre-substitutions \(\delta,\gamma\) in \catt, we
    have the following equality of pre-substitutions in \hott:
    \(\interp{\delta\circ\gamma}_{\hvar B} = \interp{\delta}_{\hvar
      B} \circ \interp{\gamma}_{\hvar B}\).
  \end{itemize}
\end{restatable}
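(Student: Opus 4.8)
The plan is to prove the three statements simultaneously by a single well-founded induction, where the crucial observation is that the induction proceeds on the structure of the object being substituted into---the pre-type $A$, the pre-term $t$, or the left-hand pre-substitution $\delta$---while the pre-substitution $\gamma$ that we apply is held fixed and plays no role in the induction measure. Concretely, in the extended substitution clause $\sub{\delta, \cvar x \mapsto s}$ we recurse into the strictly smaller $\delta$ and $s$, and in the coherence clause $\coh_{\Gamma,A}[\delta]$ we appeal to the substitution statement on the strictly smaller $\delta$; every other clause either bottoms out or recurses into proper subexpressions, so the scheme is well founded and, importantly, independent of the size of $\gamma$.

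Before the induction I would record one bookkeeping fact about the distinguished variable $\hvar B$: by the two defining clauses of $\interp{\_}_{\hvar B}$ on substitutions, every translated substitution $\interp{\gamma}_{\hvar B}$ contains the clause $\hvar B \mapsto \hvar B$, and moreover sends the translated variable $\hvar x$ to $\interp{\gamma(\cvar x)}_{\hvar B}$, where $\gamma(\cvar x)$ is the pre-term that $\gamma$ assigns to $\cvar x$. Both are immediate inductions on $\gamma$, and they furnish the base cases: for the type $\obj$ we have $\interp{\obj[\gamma]}_{\hvar B} = \El(\hvar B) = \El(\hvar B)[\interp{\gamma}_{\hvar B}] = \interp{\obj}_{\hvar B}[\interp{\gamma}_{\hvar B}]$ precisely because $\interp{\gamma}_{\hvar B}$ fixes $\hvar B$, and for a variable $t = \cvar x$ we have $\hvar x[\interp{\gamma}_{\hvar B}] = \interp{\gamma(\cvar x)}_{\hvar B} = \interp{\cvar x[\gamma]}_{\hvar B}$.

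The inductive clauses are then routine unfoldings against the definitions and the substitution laws of the category-with-families structure on \hott. For the arrow type, $\arr[A']{t}{u}[\gamma] = \arr[A'[\gamma]]{t[\gamma]}{u[\gamma]}$, so after applying $\interp{\_}_{\hvar B}$ both sides become an $\Id$-type of the componentwise images, and the three induction hypotheses on $A'$, $t$ and $u$ close the case. For the extended substitution $\sub{\delta, \cvar x \mapsto s}$, one unfolds $(\sub{\delta, \cvar x \mapsto s}) \circ \gamma = \sub{\delta \circ \gamma, \cvar x \mapsto s[\gamma]}$, applies the induction hypotheses (the substitution statement on $\delta$ and the term statement on $s$), and recognises the result as $\interp{\sub{\delta, \cvar x \mapsto s}}_{\hvar B} \circ \interp{\gamma}_{\hvar B}$ by the very definition of composition of \hott substitutions.

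The one case carrying any content is the coherence $t = \coh_{\Gamma,A}[\delta]$, which is also where the main obstacle lies. Here $t[\gamma] = \coh_{\Gamma,A}[\delta \circ \gamma]$, so $\interp{t[\gamma]}_{\hvar B} = \interp{\coh_{\Gamma,A}}_{\hvar B}[\interp{\delta \circ \gamma}_{\hvar B}]$, and the substitution induction hypothesis rewrites $\interp{\delta \circ \gamma}_{\hvar B}$ as $\interp{\delta}_{\hvar B} \circ \interp{\gamma}_{\hvar B}$. On the other side $\interp{t}_{\hvar B}[\interp{\gamma}_{\hvar B}] = (\interp{\coh_{\Gamma,A}}_{\hvar B}[\interp{\delta}_{\hvar B}])[\interp{\gamma}_{\hvar B}]$, and the functoriality of substitution in \hott, namely $w[\sigma][\tau] = w[\sigma \circ \tau]$, collapses this to the same expression. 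The subtle point is that throughout this argument the translated coherence $\interp{\coh_{\Gamma,A}}_{\hvar B}$, defined through the auxiliary $\elim$ by the mutual recursion, is treated as an opaque pre-term: we never unfold its $\elim$ definition, and the whole case goes through purely by the substitution-composition law applied to that black box together with the substitution hypothesis on the strictly smaller $\delta$. This is exactly what keeps the mutual recursion with $\elim$ from leaking into the present lemma, and it is the reason the measure can be taken on the object being substituted into rather than on $\gamma$.
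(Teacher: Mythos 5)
Your proposal is correct and follows essentially the same route as the paper's own proof: fix \(\gamma\), induct mutually on the structure of the type, term, or left-hand substitution, use the fact that the binding \(\hvar B\mapsto\hvar B\) (and each \(\hvar x\mapsto\interp{\gamma(\cvar x)}_{\hvar B}\)) is never overridden because \(\hvar B\) is not a \catt name, and in the coherence case treat \(\interp{\coh_{\Gamma,A}}_{\hvar B}\) as opaque and close with the functoriality law \(w[\sigma][\tau]=w[\sigma\circ\tau]\) together with the substitution induction hypothesis on \(\delta\). Your explicit remarks on why the measure is independent of \(\gamma\) and why the mutual recursion with \(\elim\) does not leak into this lemma are accurate refinements of the paper's brief observation that the result is purely syntactic.
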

\begin{proof}
  One can check this property by structural induction on the pre-syntax of
  \catt, see Appendix~\ref{app:catt-to-hott}.
\end{proof}

\begin{restatable}{lemma}{translationinit}
  \label{lemma:translation-init} For every term
  \(\termrule{(\cvar x:\obj)}{u}{A}\) in \catt, we have
  \(\interp{u}_{\hvar B} \equiv \refl_{\hvar B, \hvar x}^{\dim A + 1}\).
  Similarly, for every type \(\typerule{(\cvar x : \obj)}{A}\) in \catt, we have
  \(\interp{A}_{\hvar B} \equiv\Id^{\dim A+1}_{\hvar B, \hvar x}\)
\end{restatable}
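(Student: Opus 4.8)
The plan is to prove both statements simultaneously by the same mutual structural induction that underlies the definition of \(\elim\): an induction on the syntax that, upon reaching a term constructor \(\coh_{\Gamma,C}\), temporarily switches to an induction on the length of the ps-context \(\Gamma\). I write (T) for the type statement \(\interp{A}_{\hvar B}\equiv\Id^{\dim A+1}_{\hvar B,\hvar x}\) and (t) for the term statement \(\interp{u}_{\hvar B}\equiv\refl^{\dim A+1}_{\hvar B,\hvar x}\). The statement (T) is the easy half and follows from (t) on strictly smaller data: if \(A=\obj\) then \(\interp{\obj}_{\hvar B}=\El(\hvar B)=\Id^{0}_{\hvar B,\hvar x}\) and \(\dim\obj+1=0\); if \(A=\arr[A']{u}{v}\) then by (T) for the smaller type \(A'\) and (t) for the endpoints \(u,v\) (of type \(A'\), of dimension \(\dim A-1\)) I obtain \(\interp{A}_{\hvar B}=\Id_{\interp{A'}_{\hvar B}}(\interp{u}_{\hvar B},\interp{v}_{\hvar B})\equiv\Id_{\Id^{\dim A'+1}_{\hvar B,\hvar x}}(\refl^{\dim A'+1}_{\hvar B,\hvar x},\refl^{\dim A'+1}_{\hvar B,\hvar x})=\Id^{\dim A+1}_{\hvar B,\hvar x}\) directly from the definition of the iterated identity type.

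For (t), a term of type \(\obj\) in \((\cvar x:\obj)\) can only be the variable \(\cvar x\), since \ruleref{rule:coh-wd} forces every coherence to have an arrow type; and \(\interp{\cvar x}_{\hvar B}=\hvar x=\refl^{0}_{\hvar B,\hvar x}\), matching \(\dim\obj+1=0\). The real content is the case \(u=\coh_{\Gamma,C}[\gamma]\). By Lemma~\ref{lemma:translation-functor}, \(\interp{u}_{\hvar B}=\elim_{\hvar B}(\Gamma,C,\Gamma,\id)\,[\interp{\gamma}_{\hvar B}]\); and since each component \(\gamma(\cvar z)\) is a term of \((\cvar x:\obj)\) whose type has dimension \(\dim z\), the induction hypothesis (t) shows that \(\interp{\gamma}_{\hvar B}\) is a \emph{collapsing} substitution, sending each variable \(\hvar z\) to \(\refl^{\dim z+1}_{\hvar B,\hvar x}\). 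As the target type \(C[\gamma]\) has dimension \(\dim C\), it remains to prove \(\elim_{\hvar B}(\Gamma,C,\Gamma,\id)\,[\interp{\gamma}_{\hvar B}]\equiv\refl^{\dim C+1}_{\hvar B,\hvar x}\).

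I establish this equality by an inner induction on the length of the peeled ps-context, proving more generally that \(\elim_{\hvar B}(\Delta,C,\Gamma,\delta)\,[\interp{\gamma}_{\hvar B}|_{\Delta}]\equiv\refl^{\dim C+1}_{\hvar B,\hvar x}\) for every sub-ps-context \(\Delta\) reached in the recursion. In the base case \(\Delta=(\cvar z:\obj)\), the value \(\refl^{\dim C+1}_{\hvar B,\hvar z}\) is sent by the restricted collapse to \(\refl^{\dim C+1}_{\hvar B,\hvar x}\), since \(\hvar z\mapsto\refl^{0}_{\hvar B,\hvar x}=\hvar x\). In the inductive case \(\Delta=(\Delta',\cvar y:C',\cvar f:\arr[C']{\cvar s}{\cvar y})\), the term is \(\J(\interp{C'}_{\hvar B},\ldots)\,\hvar y\,\hvar f\), and applying the collapse sends \(\hvar f\) to \(\refl^{\dim C'+2}_{\hvar B,\hvar x}=\refl_{\Id^{\dim C'+1}_{\hvar B,\hvar x},\,\refl^{\dim C'+1}_{\hvar B,\hvar x}}\), a reflexivity term. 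The computation rule~\eqref{eq:eta-j} then fires and reduces the whole \(\J\) to its base branch \(p\), carrying the restricted collapse down to \(\Delta'\), which is exactly the inner induction hypothesis on the shorter ps-context.

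The main obstacle, and the step demanding the most care, is checking that each application of~\eqref{eq:eta-j} is legitimate, i.e.\ that its typing side-conditions hold: the base point, the applied argument and the reflexivity witness must be definitionally equal, and the type index of the eliminator must match the type of the reflexivity. The endpoint conditions hold because \(\cvar s\) and \(\cvar y\) both have type \(C'\) of dimension \(\dim C'\), so the collapse sends them to the common term \(\refl^{\dim C'+1}_{\hvar B,\hvar x}\). The type condition amounts to \(\interp{C'}_{\hvar B}[\interp{\gamma}_{\hvar B}|_{\Delta'}]\equiv\Id^{\dim C'+1}_{\hvar B,\hvar x}\); since the restriction \(\interp{\gamma}_{\hvar B}|_{\Delta'}\) coincides with the translation \(\interp{\gamma|_{\Delta'}}_{\hvar B}\) of a \catt substitution into \((\cvar x:\obj)\), Lemma~\ref{lemma:translation-functor} rewrites the left-hand side as \(\interp{C'[\gamma|_{\Delta'}]}_{\hvar B}\), and statement (T) applied to the \((\cvar x:\obj)\)-type \(C'[\gamma|_{\Delta'}]\) discharges it. The final bookkeeping is to confirm that every invocation of (T) and (t) above is on data strictly smaller in the combined structural/ps-length measure, which is the same well-founded measure that makes the definition of \(\elim\) legitimate.
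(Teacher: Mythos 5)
Your proof is correct and follows essentially the same route as the paper's: an outer structural induction on terms combined with an inner induction on the length of the ps-context, whose key step is that the collapsing substitution turns the last applied argument of each \(\J\) into an iterated reflexivity of positive level, so that~\eqref{eq:eta-j} fires and the recursion descends to the base branch \(p\). The only point you leave implicit --- that after~\eqref{eq:eta-j} fires, the residual term \(p\) only mentions variables bound in the shorter context, so that restricting the collapsing substitution is harmless --- is isolated in the paper as a separate preliminary statement (Lemma~\ref{lemma:vars-in-elim}); otherwise your argument matches the paper's, with your explicit check of the typing side-conditions of~\eqref{eq:eta-j} being rendered unnecessary by the paper's untyped linear formulation of that rule.
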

\begin{proof}
  The result on term can be proven by induction together with a corresponding
  for the auxiliary operation \(\elim\). The case of \(\elim\) is itself an
  induction on the length of the ps-context given by the first argument, and
  amounts to successive applications of the computation rule~\eqref{eq:eta-j}.
  The result on types is a consequence of that of terms. See
  Appendix~\ref{app:catt-to-hott} for more details.
\end{proof}

\begin{restatable}{proposition}{translationcatt}
  \label{prop:translation-catt}
  The following results hold:
  \begin{itemize}
  \item For any context \(\ctxrule{\Gamma}\) in \catt, we have
    \(\ctxrule{\interp{\Gamma}_{\hvar{B}}}\) in \hott.
  \item For any type \(\typerule{\Gamma}{A}\) in \catt, we have
    \(\typerule{\interp{\Gamma}_{\hvar{B}}}{\interp{A}_{\hvar B}}\) in \hott.
  \item For any term \(\termrule{\Gamma}{t}{A}\) in \catt, we have
    \(\termrule{\interp{\Gamma}_{\hvar{B}}}{\interp{t}_{\hvar{B}}}{\interp{A}_{\hvar{B}}}\)
    in \hott.
  \item For any substitution \(\subrule{\Delta}{\gamma}{\Gamma}\) in \catt, we
    have
    \(\subrule{\interp{\Delta}_{\hvar{B}}}{\interp{\gamma}_{\hvar
        B}}{\interp{\Gamma}_{\hvar B}}\) in \hott.
  \item For any coherence \(\vdash \coh_{\Gamma,A}\), we have
    \(\termrule{\interp{\Gamma}_{\hvar B}}{\interp{\coh_{\Gamma,A}}_{\hvar
        B}}{\interp{A}_{\hvar B}}\) in \hott.
  \item For any pair of ps-contexts \(\Gamma,\Delta\) together with a type
    \(\typerule{\Gamma}{A}\) in \catt and a substitution
    \(\subrule{\interp{\Delta}_{\hvar B}}{\gamma}{\interp{\Gamma}_{\hvar B}}\)
    in \hott, the following judgement is derivable in \hott:
    \[
      \termrule{\interp{\Delta}_{\hvar B}}{\elim_{\hvar
          B}(\Delta,A,\Gamma,\gamma)}{\interp{A}_{\hvar B}[\gamma]}
    \]
  \end{itemize}
\end{restatable}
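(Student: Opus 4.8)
The plan is to prove all six statements simultaneously by a single mutual induction, following exactly the inductive scheme described after the definition of $\elim$: structural induction on the syntax of \catt, which upon encountering a term constructor $\coh_{\Gamma,A}$ switches temporarily to an induction on the length of the ps-context $\Gamma$ (the final bullet, the statement about $\elim$), before resuming the structural induction. The well-foundedness of this scheme has already been argued in the text: every inductive call is strictly decreasing except in the $\elim$ case, where the structural height of $A$ stays constant but the first argument strictly decreases. Before starting, I would note that Lemma~\ref{lemma:translation-functor} (compatibility with substitution) is available throughout, so that whenever a translated type or term is hit by a translated substitution I may freely rewrite $\interp{A[\gamma]}_{\hvar B}$ as $\interp{A}_{\hvar B}[\interp{\gamma}_{\hvar B}]$, and similarly for terms and composites of substitutions; this is what makes the typing computations line up.

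The context, type, term (variable case), and substitution bullets are routine and essentially identical to the argument for Lemma~\ref{lemma:translation-gsett}: contexts translate by appending typed variables starting from $(\hvar B : \Type)$, the type formers $\obj$ and $\arr[A]{t}{u}$ translate to $\El(\hvar B)$ and $\Id$ using \ruleref{rule:el} and \ruleref{rule:id}, a variable $\cvar x$ translates to $\hvar x$ whose typing is read off the translated context, and substitutions translate componentwise with $\hvar B \mapsto \hvar B$ as the base. For the coherence bullet, I would observe that by definition $\interp{\coh_{\Gamma,A}}_{\hvar B} = \elim_{\hvar B}(\Gamma,A,\Gamma,\id_{\interp{\Gamma}_{\hvar B}})$, so it is an instance of the $\elim$ bullet with $\Delta = \Gamma$ and $\gamma = \id_{\interp{\Gamma}_{\hvar B}}$: the latter gives a term of type $\interp{A}_{\hvar B}[\id] = \interp{A}_{\hvar B}$, which is exactly what is required. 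Finally, for the term bullet in the case $\coh_{\Gamma,A}[\gamma]$, I would combine the coherence bullet (giving $\interp{\coh_{\Gamma,A}}_{\hvar B}$ of type $\interp{A}_{\hvar B}$ in context $\interp{\Gamma}_{\hvar B}$) with the substitution bullet and \ruleref{rule:coh}-style substitution application in \hott, rewriting the resulting type $\interp{A}_{\hvar B}[\interp{\gamma}_{\hvar B}]$ as $\interp{A[\gamma]}_{\hvar B}$ via Lemma~\ref{lemma:translation-functor}.

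The heart of the proof, and the main obstacle, is the $\elim$ bullet, proved by induction on the derivation of the ps-context encoded in the first argument $\Delta$. In the base case $\Delta = (\cvar x : \obj)$, the output is $\refl_{\hvar B,\hvar x}^{\dim A + 1}$, and I must show it has type $\interp{A}_{\hvar B}[\gamma]$. Here I expect to invoke Lemma~\ref{lemma:translation-init}: since the relevant source context reduces to a single object variable, the type $\interp{A}_{\hvar B}$ (after transport along $\gamma$) computes to $\Id^{\dim A + 1}_{\hvar B,\hvar x}$, whose canonical inhabitant is precisely $\refl_{\hvar B,\hvar x}^{\dim A+1}$ by Lemma~\ref{lemma:iterated-refl}. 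In the inductive step $\Delta = (\Delta',\cvar y : C,\cvar f : \arr[C]{\cvar x}{\cvar y})$, the output is a $\J$-application $\J(\interp{C}_{\hvar B},\hvar x, P, p)\,\hvar y\,\hvar f$, and I would verify each premise of \ruleref{rule:j} in turn: the motive $P(\hvar{x'},\hvar{y'},\hvar{f'})$ is a well-formed type in the extended context (using the type bullet applied to $A$ together with Lemma~\ref{lemma:translation-functor} to handle the substitution rewriting $\hvar x,\hvar y,\hvar f$ to the bound primed variables), and the base point $p$ is well-typed by the inductive hypothesis of the $\elim$ bullet applied to the strictly shorter $\Delta'$, with the substitution $\gamma \circ \sub{\id, \hvar y \mapsto \hvar x, \hvar f \mapsto \refl_{\interp{C}_{\hvar B},\hvar x}}$. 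The delicate point is bookkeeping: I must check that $p$ lands in $P(\hvar x,\hvar x,\refl_{\interp C_{\hvar B},\hvar x})$, which by unfolding $P$ is $\interp{A}_{\hvar B}[\gamma][\sub{\id,\hvar x\mapsto \hvar x, \hvar y\mapsto \hvar x,\hvar f\mapsto \refl}]$, and that this equals the type produced by the inductive hypothesis after composing the two substitutions — this is where Lemma~\ref{lemma:translation-functor}'s functoriality of substitution composition is essential. The conclusion of \ruleref{rule:j} is a $\Pi$-type over $\hvar{y'}$ and $\hvar{f'}$; applying it to $\hvar y$ and $\hvar f$ (available in the ambient context $\interp{\Delta}_{\hvar B}$) yields a term of type $P(\hvar x,\hvar y,\hvar f) = \interp{A}_{\hvar B}[\gamma]$, completing the step. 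The genuine difficulty throughout is managing the interplay between the substitution $\gamma$, the identity-on-context substitution, and the renaming of $\hvar x,\hvar y,\hvar f$, all of which must be tracked carefully so that the motive, base point, and final application compose to exactly the required type.
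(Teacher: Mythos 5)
Your proposal is correct and follows essentially the same route as the paper's proof: the same mutual induction scheme, the routine cases handled as in Lemma~\ref{lemma:translation-gsett}, the coherence bullet reduced to the $\elim$ bullet at the identity substitution, and the $\elim$ inductive step resolved by checking the premises of \ruleref{rule:j} with the motive handled by cut admissibility, the base point by the induction hypothesis on the shorter ps-context, and the type mismatches reconciled via Lemma~\ref{lemma:translation-functor} and functoriality of substitution, exactly as in the appendix.
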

\begin{proof}
  One can prove this result by mutual induction on the derivation and induction
  on the length of the ps-context for the last case, following the induction
  scheme defining the translation. The case of \(\elim\) is the interesting one,
  where the case of a ps-context with a single object comes from
  Lemma~\ref{lemma:translation-init}, and for the inductive case, we need to
  check that the application of the \(\J\) rule is valid and has the desired
  type. This is done by induction, and manipulation of the judgements with usual
  lemmas about the structure of the type theory, such as cut admissibility and
  the functoriality of the action of substitutions. A complete proof is given in
  Appendix~\ref{app:catt-to-hott}.
\end{proof}

We now proceed take the liftings of the translation of types and terms in order
to obtain a translation from \catt terms to closed terms in \hott. Given a type
\(\typerule{\Gamma}{A}\) in \catt, we denote \(\interp{\typerule{\Gamma}{A}}\)
the \(\Pi\)-lifting of the type
\(\termrule{\interp{\Gamma}_{\hvar B}}{\interp{A}_{\hvar B}}\) in \hott, and
given a term \(\termrule{\Gamma}{t}{A}\) in \catt, we denote
\(\interp{\termrule{\Gamma}{t}{A}}\) the \(\lambda\)-lifting of the term
\(\termrule{\interp{\Gamma}_{\hvar B}}{\interp{t}_{\hvar B}}{\interp{A}_{\hvar
    B}}\) in \hott. This is independent from the choice of the variable
\(\hvar B\).
\begin{theorem}
  Given a derivable term \(\termrule{\Gamma}{t}{A}\) in \catt, the closed
  \(\lambda\)-term \(\interp{\termrule{\Gamma}{t}{A}}\) is valid in \hott. More
  precisely, the following judgement is derivable in \hott:
  \[
    \termrule{\emptycontext}{\interp{\termrule{\Gamma}{t}{A}}}
    {\interp{\typerule{\Gamma}{A}}}.
  \]

\end{theorem}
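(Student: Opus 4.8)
The plan is to obtain this statement as a direct corollary of the term case of Proposition~\ref{prop:translation-catt} together with the $\lambda$-lifting of Lemma~\ref{lemma:lambda-lifting}; essentially all of the substantive work has already been discharged in those two results, so what remains is to chain them and unfold the definitions of the liftings. First I would fix a variable $\hvar B$ in \hott that is not a name in \catt, as in the definition of the translation functions. Applying the term clause of Proposition~\ref{prop:translation-catt} to the given derivable term $\termrule{\Gamma}{t}{A}$ yields that the open judgement
\[
  \termrule{\interp{\Gamma}_{\hvar B}}{\interp{t}_{\hvar B}}{\interp{A}_{\hvar B}}
\]
is derivable in \hott. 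This is precisely the hypothesis required to feed into the $\lambda$-lifting rule of Lemma~\ref{lemma:lambda-lifting}.

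Next I would invoke Lemma~\ref{lemma:lambda-lifting} on this judgement, instantiating its context with $\interp{\Gamma}_{\hvar B}$, its type with $\interp{A}_{\hvar B}$, and its term with $\interp{t}_{\hvar B}$. This produces the closed judgement
\[
  \termrule{\emptycontext}{\lambda_{\interp{\Gamma}_{\hvar B}}.\interp{t}_{\hvar B}}{\Pi_{\interp{\Gamma}_{\hvar B}}.\interp{A}_{\hvar B}}
\]
in \hott. It then suffices to recognise that the term and type appearing here are exactly the notations introduced just before the theorem statement: by definition $\lambda_{\interp{\Gamma}_{\hvar B}}.\interp{t}_{\hvar B} = \interp{\termrule{\Gamma}{t}{A}}$ is the $\lambda$-lifting of the translated term, and $\Pi_{\interp{\Gamma}_{\hvar B}}.\interp{A}_{\hvar B} = \interp{\typerule{\Gamma}{A}}$ is the $\Pi$-lifting of the translated type. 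Substituting these abbreviations gives the claimed judgement $\termrule{\emptycontext}{\interp{\termrule{\Gamma}{t}{A}}}{\interp{\typerule{\Gamma}{A}}}$, and the independence from the choice of $\hvar B$ follows from the $\alpha$-equivalence remark recorded earlier, so that the variable legitimately disappears from the notation.

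Since the argument is a two-step composition of already-established lemmas, there is no genuine obstacle remaining at this level: the only delicate point is purely bookkeeping, namely confirming that the output of Lemma~\ref{lemma:lambda-lifting} literally matches the defined notations $\interp{\termrule{\Gamma}{t}{A}}$ and $\interp{\typerule{\Gamma}{A}}$. All the real difficulty has been pushed into Proposition~\ref{prop:translation-catt}, and in particular into its $\elim$ clause, where one must verify that each application of the $\J$ rule is well-typed with the intended type; that verification is the content I would rely on here rather than reprove.
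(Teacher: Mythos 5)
Your proposal is correct and matches the paper's own proof exactly: the paper derives the theorem as an immediate consequence of Proposition~\ref{prop:translation-catt} (term case) followed by Lemma~\ref{lemma:lambda-lifting}, which is precisely the two-step composition you describe. Your additional bookkeeping about unfolding the lifting notations and the independence from the choice of \(\hvar B\) is consistent with the remarks the paper makes just before the theorem statement.
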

\begin{proof}
  This is consequence of Lemma~\ref{lemma:lambda-lifting} and
  Proposition~\ref{prop:translation-catt}
\end{proof}

\section{Leveraging automation}
\label{sec:implem}
We have now finished the presentation of our translation scheme, and we conclude
with a discussion around the implementation of this translation scheme, and
briefly present mechanisation principles available in \catt, to finish with an
example where we generate a proof term in \hott leveraging this mechanisation,
and compare it with a definition of a similar term directly in \hott.

\subsection{Implementation}
We have implemented the translation algorithm presented in
Section~\ref{sec:catt-to-hott}, in the form of a plugin for the \coq proof
assistant, named
\cattplugin\footnote{\url{https://www.github.com/thibautbenjamin/catt}}. This
plugin extends the syntax of \coq with a new primitive \verb|Catt|, that can be
used with the following syntax:
\begin{verbatim}
Catt [names] From File [path]
\end{verbatim}
where \verb|[path]| is the path of a \catt file and \verb|[names]| is a list of
space separated string corresponding to names of declared coherences and terms
in the \catt file. Upon execution, this command uses as a backend \catt to parse
and type-check the provided file, and then uses the translation algorithm to
generate the terms in \coq corresponding to the specified coherences and terms.
The choice of \coq was motivated by the extensibility of its plugin system, and
the ease of tooling, since both \coq and our implementation of \catt are written
in \ocaml. In order to generate terms, our implementation interfaces directly
with the \coq kernel. Following \coq's internals, our implemented algorithm uses
pattern-matching of identities against the reflexivity terms instead of
application of the \(\J\) rule.

Our implementation is still experimental, and could be improved in many ways.
First, as the time of writing, we always expand all applied terms down to
coherences. In practice, this means every function application is inline, making
the type checking by \coq computationally heavy.

\subsection{Mechanisation principles in CaTT}
Despite \hott terms interpret \catt terms, there are still three main
differences between the work in \catt and the work in \hott. First, \catt is a
theory for weak \(\omega\)-categories, and thus is inherently directed, while in
\hott the identity types satisfy symmetry. Second, \catt is fully weak, that is
terms in \catt never reduce to other terms, while in \hott, it is always the
case that the composition of two reflexivity definitionnally reduces to a
reflexivity. Finally, in \hott, all the operations that we define on identity
types are polymorphic, allowing to lift them to identity types on identity types
themselves, and so on, while in \catt, the type \(\obj\) can only ever be
substituted for itself. This may indicate that working in \catt is harder than
working in \hott. However, this added complexity is largely offset by the
various mechanisation features that implemented in \catt, which are only
permitted by the simplicity of the language. The suspension meta-operation in
\catt~\cite{benjaminHomCategoriesComputad2024} exactly accounts for the lack of
polymorphism. The automatic computation of inverses and cancellation
witnesses~\cite{benjamin_invertible_2024} accounts for the lack of symmetry when
the cells still happen to be invertible. Other meta-operations, such as the
opposites~\cite{benjaminHomCategoriesComputad2024} and the
functorialisation~\cite{benja2020} account for a wide range of
phenomena that are not accounted for in \hott. These meta-operations are
leveraged in practical uses of \catt to construct complex terms with a minimal
proof effort.

\subsection{Quantitative experiment: the Eckmann-Hilton cell}
As an example to illustrate our point, we present here the Eckmann-Hilton cell.
This is a cell that formalises the Eckmann-Hilton argument, playing an important
role in topology. It can be derived in \hott by considering a term \(x\) of any
type \(A\) and two terms \(a,b\) of type \(\Id(\refl_{A,x},\refl_{A,x})\), and
is a witness that inhabit the identity type between the transitivity of \(a\)
and \(b\), and the transitivity of \(b\) and \(a\). Deriving this cell is done
by purely algebraic manipulation of identity types, and can be done in \catt.
Figure~\ref{fig:eh} illustrates the intuitive idea allowing to define this cell,
where the one dimensional cells are all identities. This figure is only for
intuition, as it is missing a lot of data, such as the associators and unitors.

\begin{figure}
  \centering
\[\begin{tikzcd}[column sep = small]
  & x
  && x
  &&&& x
  && x \\
  \\
  \\
  x
  && x
  && x
  && x
    && x
  && x
     \arrow[""{name=0, anchor=center, inner sep=0}, "\shortmid"{marking}, curve={height=-30pt}, from=1-2, to=1-4]
     \arrow[""{name=1, anchor=center, inner sep=0}, "\shortmid"{marking}, curve={height=30pt}, from=1-2, to=1-4]
     \arrow[""{name=2, anchor=center, inner sep=0}, "\shortmid"{marking}, from=1-2, to=1-4]
     \arrow[""{name=3, anchor=center, inner sep=0}, "\shortmid"{marking}, curve={height=-30pt}, from=1-8, to=1-10]
     \arrow[""{name=4, anchor=center, inner sep=0}, "\shortmid"{marking}, curve={height=30pt}, from=1-8, to=1-10]
     \arrow[""{name=5, anchor=center, inner sep=0}, "\shortmid"{marking}, from=1-8, to=1-10]
     \arrow[""{name=6, anchor=center, inner sep=0}, "\shortmid"{marking}, from=4-1, to=4-3]
     \arrow[""{name=7, anchor=center, inner sep=0}, "\shortmid"{marking}, curve={height=-30pt}, from=4-1, to=4-3]
     \arrow[""{name=8, anchor=center, inner sep=0}, "\shortmid"{marking}, curve={height=30pt}, from=4-3, to=4-5]
     \arrow[""{name=9, anchor=center, inner sep=0}, "\shortmid"{marking}, from=4-3, to=4-5]
     \arrow[shorten <=10pt, shorten >=10pt, Rightarrow, scaling nfold=3, from=4-5, to=4-7]
     \arrow[""{name=10, anchor=center, inner sep=0}, "\shortmid"{marking}, curve={height=30pt}, from=4-7, to=4-9]
     \arrow[""{name=11, anchor=center, inner sep=0}, "\shortmid"{marking}, from=4-7, to=4-9]
     \arrow[""{name=12, anchor=center, inner sep=0}, "\shortmid"{marking}, curve={height=-30pt}, from=4-9, to=4-11]
     \arrow[""{name=13, anchor=center, inner sep=0}, "\shortmid"{marking}, from=4-9, to=4-11]
     \arrow["a", shorten <=4pt, shorten >=4pt, Rightarrow, from=0, to=2]
     \arrow["b", shorten <=4pt, shorten >=4pt, Rightarrow, from=2, to=1]
     \arrow[shorten <=8pt, shorten >=12pt, Rightarrow, scaling nfold=3, from=1, to=4-3]
     \arrow["b"', shorten <=4pt, shorten >=4pt, Rightarrow, from=3, to=5]
     \arrow["a"', shorten <=4pt, shorten >=4pt, Rightarrow, from=5, to=4]
     \arrow["a"', shorten <=4pt, shorten >=4pt, Rightarrow, from=7, to=6]
     \arrow["b"', shorten <=4pt, shorten >=4pt, Rightarrow, from=9, to=8]
     \arrow["a"', shorten <=4pt, shorten >=4pt, Rightarrow, from=11, to=10]
     \arrow[shorten <=12pt, shorten >=8pt, Rightarrow, scaling nfold=3, from=4-9, to=4]
     \arrow["b"', shorten <=4pt, shorten >=4pt, Rightarrow, from=12, to=13]
\end{tikzcd}\]
\caption{Intuitive illustration of the Eckmann-Hilton cell}
\label{fig:catt-eh}
\end{figure}
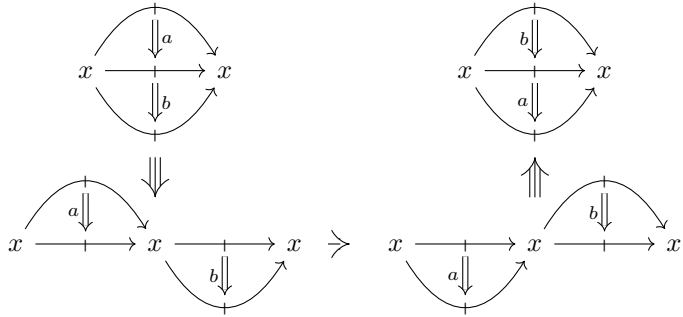

We have a \catt definition of the Eckmann-Hilton cell using all the
mechanisation features mentioned above (see Appendix~\ref{app:eh-catt}), using
786 characters. The definition of the Eckmann-Hilton cell in the \hott library
of \coq\footnote{\url{https://github.com/HoTT/Coq-HoTT}} is in the file
\verb|theories/Homotopy/Syllepsis.v|. A direct comparison with this file would
not be fair, since it relies on modules that provide a lot more than just the
required lemmas for defining the Eckmann-Hilton cell. Nevertheless, a
qualitative comparison shows that the \catt definition is far shorter. In order
to assess our translation scheme, we compare a definition of the Eckmann-Hilton
cell obtained by generation from \catt with the definition that is in the \hott
library. As a proxy for the complexity of a term, we count the number of
characters that \coq prints when asked to print the term in normal form with
\verb|Eval cbv in [term]|. When running the experiment, we found the complexity
of the term generated by \catt to be of 79773 characters long, while that of the
term from \hott is 428183 characters long. These experiments were conducted in a
slightly modified version of our
implementation\footnote{\url{https://github.com/thibautbenjamin/catt/tree/catt-vs-hott/coq_plugin/catt-vs-hott}}
where the generated names of variables have been shortened to use the same name
as in \catt, as to not artificially skew the results, since our actual
implementation generates very long variable names. We also used the
\verb|Set Printing All| option, to avoid hidden confusion factors, such as
conveniently defined notations and implicit arguments. Overall, we deem these
results surprising, since we expect \catt to generally generate longer terms,
due to the fact that it is inherently directed and fully weak, although not very
significant due to the proxy for measuring complexity that we chose. We believe
that this result is mostly due to the definition of the Eckmann-Hilton cell in
\catt being more parsimonious, and we take this as a witness that \catt is a
helpful tool when reasoning on higher categories making such a definition
achievable, but mostly we take these results as witnesses that the complexity of
the generated is reasonably similar to that of similar terms that one would
define by hand.

\bibliographystyle{plain}
\bibliography{biblio}
\newpage

\appendix
\section{Structural rules of dependent type theories}
\label{app:cwf}
We consider dependent type theories that always satisfy some amount of
structure, namely that which makes their syntax into a category with families.
We recall here part of this structure that we use, and which is satisfied by all
the types theories we consider. First, the variables are pre-terms, and we want
there to be an action of substitutions on types and terms. On term type and term
constructors, this action is defined for every type and term constructor, but is
usually the unique sensible choice. On variables, this action associates the
pre-term associated to the variable in the substitution, or leaves the variable
unchanged if there is no association. We can express this inductively as follows
\begin{align*}
  x[\sub{}] & =  \textsf{undefined}
  & x[\sub{\gamma,y\mapsto t}]
  &=
    \begin{cases}
      t & \text{if \(y=x\)}\\
      x[\gamma] & \text{otherwise}
    \end{cases}
\end{align*}
There is also a composition of substitutions defined inductively from the action
on terms as follows
\begin{align*}
  \sub{}\circ\gamma & = \sub{}
  & \sub{\delta,x\mapsto t}\circ\gamma
  &= \sub{\delta\circ\gamma,x\mapsto t[\gamma]}.
\end{align*}
All the substitutions that we consider here are implicitly well-scoped, and so
in particular, when we write \(t[\gamma]\), (reps.~\(A[\gamma]\),
\(\delta\circ\gamma\)), we implicitly assume that all free variables of \(t\),
(resp.~\(A\),\(\delta\)) are bound in \(\gamma\). Given a pre-context
\(\Gamma\), we denote \(\id_{\Gamma}\) the identity substitution on \(\Gamma\)
which maps every variable to itself. It is required that the action of
substitutions on pre-types and on pre-terms is compatible with the substitution,
and that the identity is a neutral element for this action in the sense that the
following equations hold:
\begin{align}
  \label{eq:functorial-action}
  \begin{aligned}
    A[\delta\circ\gamma] &= A[\delta][\gamma]
    &
    t[\delta\circ\gamma] &= t[\delta][\gamma] \\
    A[\id_{\Gamma}] &= A & t[\id_{\Gamma}] &= t
    & \gamma\circ\id_{\Gamma} &= \gamma.
  \end{aligned}
\end{align}
The derivation rules for contexts, substitutions and terms that are variables
are standard, and given by the following:
  \begin{mathpar}
    \inferdef{(\(\emptycontext\)-ctx)}{\null}{\ctxrule{\emptycontext}}
    \label{rule:ce}
    \and
    \inferdef{(+-ctx)}{\ctxrule{\Gamma} \\ \typerule{\Gamma}{A}}
    {\ctxrule{(\Gamma,x:A)}}\label{rule:cc} \\
    \inferdef{(\(\emptycontext\)-sub)}{\Gamma}{\subrule{\Gamma}
      {\sub{}}{\emptycontext}}
    \label{rule:se}
    \and
    \inferdef{(+-sub)}{
      \subrule{\Delta}{\gamma}{\Gamma} \\
      \typerule{\Gamma}{A}\\
      \termrule{\Delta}{t}{A[\gamma]}}
    {\subrule{\Delta}{\sub{\gamma,x\mapsto t}}{(\Gamma,x:A)}}
    \label{rule:sc}\\
    \inferdef{(var-in)}{\ctxrule{\Gamma} \\ \assoc(\Gamma,x) = A}
    {\termrule{\Gamma}{x}{A}}
    \label{rule:var}
  \end{mathpar}
where \(\assoc(\Gamma,x)\) is the partial function that returns the last element
to which \(x\) is associated if it exists, in the context \(\Gamma\) seen as an
association list. We also require our type theories to satisfy cut-admissibility
for terms and types, that is we require the following rules to be admissible
\begin{align}
  \label{eq:cut-admissibility}
  \inferrule{\typerule{\Gamma}{A} \quad
  \subrule{\Delta}{\gamma}{\Gamma}}
  {\typerule{\Delta}{A[\gamma]}}
  &&
     \inferrule{\termrule{\Gamma}{t}{A} \quad
  \subrule{\Delta}{\gamma}{\Gamma}}
  {\termrule{\Delta}{t[\gamma]}{A[\gamma]}}
  &&
     \inferrule{\subrule{\Gamma}{\vartheta}{\Theta} \quad
  \subrule{\Delta}{\gamma}{\Gamma}}
  {\termrule{\Delta}{\vartheta\circ\gamma}{\Theta}}
\end{align}

It is standard that Martin-Löf type theory, homotopy type theory or any
reasonable fragment of those respect all of the structure described above. The
dependent type theories \gsett and \catt also do, as witnessed by a
formalisation in \agda~\cite{benjamin_formalization_2021}. In the case of the
\catt, the action of a substitution \(\delta\) on a pre-term of the form
\(\coh_{\Gamma,A}[\gamma]\) is given by the formula
\[
  \coh_{\Gamma,A}[\gamma][\delta] = \coh_{\Gamma,A}[\gamma\circ\delta].
\]
The dependent type theory \hott is also subject to a non-trivial definitional
equality, for which we assume that we have a confluent rewriting system, and we
assume that this equality respects the typing, in the sense that the following
rules are admissible
\begin{align}
  \label{eq:congruence}
  \inferrule{\typerule{\Gamma}{A}\\ A\equiv B}{\typerule{\Gamma}{B}}
  &&
     \inferrule{\termrule{\Gamma}{u}{A}\\ A\equiv B}{\termrule{\Gamma}{u}{B}}
  &&
     \inferrule{\termrule{\Gamma}{u}{A}\\ u\equiv v}{\termrule{\Gamma}{v}{A}}
\end{align}

\section{Proof of Section~\ref{sec:gsett-to-hott}}
\label{app:gsett-to-hott}
This section is dedicated to the proof of Lemma~\ref{lemma:translation-gsett},
which shows the correctness of the translation principle from \gsett to \hott.

\translationgsett*
\begin{proof}
  We proceed by mutual induction on the derivation tree of the judgements.
  \begin{itemize}
  \item If the derivation tree ends with~\ruleref{rule:ce}, then it is a
    derivation of the judgement \(\ctxrule{\emptycontext}\). We then have
    \(\interp{\emptycontext}_{\hvar B} = (\hvar B : \Type)\), and the judgement
    \(\ctxrule{(\hvar B : \Type)}\) is derivable in \hott.
  \item If the derivation tree ends with~\ruleref{rule:cc}, then it is a
    derivation of the judgement \((\Gamma,\cvar {x}:A)\), produced from
    derivations for the two following judgements in \gsett:
    \begin{align*}
      \ctxrule{\Gamma} & & \typerule{\Gamma}{A}
    \end{align*}
    by induction, these give derivations for the following judgements in \hott
    \begin{align*}
      \ctxrule{\interp{\Gamma}_{\hvar B}} & & \typerule{\interp{\Gamma}_{\hvar
                                              B}}{\interp{A}_{\hvar B}}
    \end{align*}
    which by application of~\ruleref{rule:cc} give a derivation of the judgement
    \(\ctxrule{(\interp{\Gamma}_{\hvar B},\hvar x : \interp{A}_{\hvar B})}\) in
    \hott.
  \item A derivation ending with~\ruleref{rule:obj} is a derivation of
    \(\typerule{\Gamma}{\obj}\) obtained from a derivation of
    \(\ctxrule{\Gamma}\). By induction, this gives a derivation of the judgement
    \(\ctxrule{\interp{\Gamma}_{\hvar B}}\). Moreover, this contexts starts with
    the association \((\hvar {B} : \Type)\), and since the variable \(\hvar B\)
    is not a valid variable name in \gsett, this association cannot be
    overridden in \(\interp{\Gamma}_{\hvar B}\). Thus,~\ruleref{rule:var} gives
    a derivation of the judgement \(\termrule{\Gamma}{\hvar B}{\Type}\) in
    \hott, which in terms let us get by~\ruleref{rule:el} a derivation of
    \(\typerule{\Gamma}{\El(\hvar B)}\).
  \item A derivation ending with~\ruleref{rule:arr} is a derivation of
    \(\typerule{\Gamma}{\arr[A]{\cvar x}{\cvar y}}\) obtained from derivations
    of the following three judgements in \gsett:
     \begin{align*}
      \typerule{\Gamma}{A} & & \termrule{\Gamma}{\cvar x}{A}
      & & \termrule{\Gamma}{\cvar y}{A}
    \end{align*}
    By induction, each of them gives a derivation for the corresponding
    judgement in \hott
    \begin{align*}
      \typerule{\interp{\Gamma}_{\hvar B}}{\interp{A}_{\hvar B}}
      & & \termrule{\interp{\Gamma}_{\hvar B}}{\hvar x}{\interp{A}_{\hvar B}}
      & & \termrule{\interp{\Gamma}_{\hvar B}}{\hvar y}{\interp{A}_{\hvar B}}
    \end{align*}
    \ruleref{rule:id} then gives a derivation of the judgement
    \(\typerule{\interp{\Gamma}_{\hvar B}}{\Id_{\interp{A}_{\hvar B}}(\hvar x,
      \hvar y)}\) in \hott.
  \item A derivation finishing with an application of~\ruleref{rule:var} is
    a derivation of the judgement \(\termrule{\Gamma}{\cvar x}{A}\) obtained
    from a derivation of \(\ctxrule{\Gamma}\), such that
    \(\assoc(\Gamma,\cvar x)=A\). By induction, we get a derivation for the rule
    \(\ctxrule{\interp{\Gamma}_{\hvar B}}\) in \hott. Moreover, from the
    definition of \(\interp{\Gamma}_{\hvar B}\), one can see that
    \(\assoc(\interp{\Gamma}_{\hvar B},\hvar x) = \interp{A}_{\hvar B}\). Hence,
    we get a definition of the judgement
    \(\termrule{\interp{\Gamma}_{\hvar B}}{\hvar x}{\interp{A}_{\hvar B}}\).
  \end{itemize}
\end{proof}

\section{Proof of Section~\ref{sec:catt-to-hott}}
\label{app:catt-to-hott}

This section is dedicated to proving correctness of the translation scheme of
\catt term to closed \(\lambda\)-terms in \hott.

\translationfunctor*
\begin{proof}
  These results hold syntactically regardless of any well-foundedness principle.
  We fix a pre-substitution \(\gamma\), and proceed by mutual induction.
  \begin{itemize}
  \item Considering the pre-type \(\obj\), we have
    \(\interp{\obj[\gamma]}_{\hvar B} = {\hvar B}\). On the other hand,
    \({\interp{\gamma}}_{\hvar B}\) starts by defining the mapping
    \(\hvar B \mapsto \hvar B\), and since by assumption, the variable
    \(\hvar B\) is not a variable in \(\catt\), this mapping is never overridden
    in \(\interp{\gamma}_{\hvar B}\). Thus we have
    \(\hvar B[\interp{\gamma}_{\hvar B}] = \hvar B\). This shows the equality
    \[
      \interp{\obj[\gamma]}_{\hvar B} = (\interp{\obj}_{\hvar
        B})[\interp{\gamma}_{\hvar B}].
    \]
  \item Considering the pre-type \(\arr[A]{u}{v}\), by the inductive case for
    types, we have the equality
    \(\interp{A[\gamma]}_{\hvar B} = (\interp{A}_{\hvar
      B})[\interp{\gamma}_{\hvar B}]\). By the mutually inductive case for
    terms, we have
    \(\interp{u[\gamma]}_{\hvar B} = (\interp{u}_{\hvar
      B})[\interp{\gamma}_{\hvar B}]\), and similarly for \(v\). this shows the
    equality
    \[
      \interp{(\arr[A]{u}{v})[\gamma]}_{\hvar B} =
      (\interp{\arr[A]{u}{v}}_{\hvar B})[\interp{\gamma}_{\hvar B}].
    \]
  \item Considering a variable \(\cvar x\) in \catt, denote
    \(u = \cvar x[\gamma]\) the term corresponding to the last binding of
    \(\cvar x\) in \(\gamma\). By definition, this binding defines the binding
    \(\hvar x \mapsto \interp{u}_{\hvar B}\) in \(\interp{\gamma}_{\hvar B}\).
    Moreover, the binding \(\cvar x \mapsto u\) being the last binding of
    \(\cvar x\) in \(\gamma\) implies that this binding of \(\hvar x\) is never
    overridden in \(\interp{\gamma}_{\hvar B}\). Hence, we have
    \(\hvar x[\interp{\gamma}_{\hvar B}] = \interp{u}_{\hvar B}\). This proves
    the equality
    \[
      \interp{\cvar x[\gamma]}_{\hvar B} = (\interp{\cvar x}_{\hvar
        B})[\interp{\gamma}_{\hvar B}]
    \]
  \item Considering the pre-term \(\coh_{\Gamma,A}[\delta]\), we have by
    definition
    \begin{align*}
      \interp{\coh_{\Gamma,A}[\delta][\gamma]}_{\hvar B}
      & =
        (\interp{\coh_{\Gamma,A}}_{\hvar B})[\interp{\delta\circ \gamma}_{\hvar
        B}] \\
      \interp{\coh_{\Gamma,A}[\delta]}_{\hvar B}[\interp{\gamma}_{\hvar B}]
      & = (\interp{(\coh_{\Gamma,A})}_{\hvar B})[\interp{\delta}_{\hvar
        B}][\interp{\gamma}_{\hvar B}]
    \end{align*}
    By the inductive case for substitutions, we have
    \(\interp{\delta \circ\gamma}_{\hvar B} = \interp{\delta}_{\hvar B}\circ
    \interp{\gamma}_{\hvar B}\). Then using the definition of the action of
    substitution on coherences as well as \eqref{eq:functorial-action} shows
    that
    \[
      \interp{\coh_{\Gamma,A}[\delta][\gamma]}_{\hvar B} =
      \interp{\coh_{\Gamma,A}[\delta]}_{\hvar B}[\interp{\gamma}_{\hvar B}]
    \]

  \item Considering the empty pre-substitution \(\sub{}\), since
    \(\hvar B[\interp{\gamma}_{\hvar B}] = \hvar B\), we have the following
    equalities
    \begin{align*}
      \interp{\sub{}\circ\gamma}_{\hvar B}
      &= \interp{\sub{}}_{\hvar B} = \sub{\hvar B\mapsto \hvar B} \\
      \interp{\sub{}}_{\hvar B} \circ \interp{\gamma}_{\hvar B}
      &= \sub{\hvar B\mapsto \hvar B[\interp{\gamma}_{\hvar B}]} = \sub{\hvar B
        \mapsto \hvar B}.
    \end{align*}
  \item Considering the pre-substitution \(\sub{\delta,\cvar x\mapsto u}\), the
    inductive case shows that
    \(\interp{\delta\circ\gamma}_{\hvar B} = \interp{\delta}_{\hvar B}\circ
    \interp{\gamma}_{\hvar B}\). The mutually inductive case for terms shows
    that
    \(\interp{u[\gamma]}_{\hvar B} = (\interp{u}_{\hvar
      B})[\interp{\gamma}_{\hvar B}]\). Together, they prove the equality
    \[
      \interp{\sub{\delta,\cvar x\mapsto u}\circ \gamma}_{\hvar B} =
      \interp{\sub{\delta,\cvar x\mapsto u}}_{\hvar B}\circ
      \interp{\gamma}_{\hvar B}. \qedhere
    \]
  \end{itemize}
\end{proof}

Before proving Lemma~\ref{lemma:translation-init}, we first show the following
result as an intermediate step
\begin{lemma}
  \label{lemma:vars-in-elim}
  Given two ps-contexts \(\Gamma\) and \(\Delta\), a type
  \(\typerule{\Gamma}{A}\) and a substitution
  \(\subrule{\Delta}{\gamma}{\Gamma}\) in \catt, the free variables of the term
  \(\elim_{\hvar B}(\Delta,A,\Gamma,\gamma)\) are all bound in the context
  \(\interp{\Delta}_{\hvar B}\).
\end{lemma}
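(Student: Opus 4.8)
The plan is to prove the statement by induction on the pasting scheme $\Delta$ supplied as the first argument of $\elim$, following the very induction scheme by which $\elim$ is defined (peeling the top two variables of the ps-context at each step). Because the recursive call of $\elim$ replaces $\gamma$ by a different substitution with a smaller target, I would not fix $\gamma$ but generalise the statement over it, carrying along the induction the invariant that every variable occurring free in an image term of $\gamma$ is bound in $\interp{\Delta}_{\hvar B}$. This invariant holds at the top-level call, where $\gamma = \id_{\interp{\Gamma}_{\hvar B}}$ and $\Delta = \Gamma$, and more generally whenever $\gamma$ is a well-scoped substitution with target $\interp{\Delta}_{\hvar B}$; it is exactly the hypothesis that makes the induction go through.

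In the base case $\Delta = (\cvar x:\obj)$, we have $\elim_{\hvar B}((\cvar x:\obj),A,\Gamma,\gamma) = \refl^{\dim A+1}_{\hvar B,\hvar x}$, whose only free variables are $\hvar B$ and $\hvar x$. Both are bound in $\interp{(\cvar x:\obj)}_{\hvar B} = (\hvar B:\Type,\hvar x:\El(\hvar B))$, so the claim holds; note that the free variables of $A$ play no role here, as only $\dim A$ survives.

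In the inductive case $\Delta = (\Delta',\cvar y:C,\cvar f:\arr[C]{\cvar x}{\cvar y})$, the term is $\J(\interp{C}_{\hvar B},\hvar x,P,p)\,\hvar y\,\hvar f$, and I would bound the free variables of each constituent separately. The variables $\hvar x$, $\hvar y$, $\hvar f$ are all bound in $\interp{\Delta}_{\hvar B}$. The type $\interp{C}_{\hvar B}$ is the $\gsett$-translation of a type over $\Delta'$, so by Lemma~\ref{lemma:translation-gsett} its free variables are bound in $\interp{\Delta'}_{\hvar B}$, hence in $\interp{\Delta}_{\hvar B}$. For the motive $P = \interp{A}_{\hvar B}[\gamma][\sub{\id,\hvar x\mapsto\hvar{x'},\hvar y\mapsto\hvar{y'},\hvar f\mapsto\hvar{f'}}]$, the three variables $\hvar{x'},\hvar{y'},\hvar{f'}$ are bound by the motive of the eliminator and do not count as free; applying $\gamma$ moves the free variables of $\interp{A}_{\hvar B}$ into $\interp{\Delta}_{\hvar B}$ by the invariant, and the trailing renaming only reshuffles variables already in $\interp{\Delta}_{\hvar B}$, so the remaining free variables of $P$ are bound in $\interp{\Delta}_{\hvar B}$.

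The step needing genuine care is the recursive argument $p = \elim_{\hvar B}(\Delta',A,\Gamma,\gamma')$, where $\gamma' = \gamma\circ\sub{\id_{\interp{\Delta'}_{\hvar B}},\hvar y\mapsto\hvar x,\hvar f\mapsto\refl_{\interp{C}_{\hvar B},\hvar x}}$: to apply the induction hypothesis on the smaller pasting scheme $\Delta'$, I must first check that $\gamma'$ satisfies the invariant relative to $\interp{\Delta'}_{\hvar B}$. This is the main (though still routine) point. The inner substitution maps $\interp{\Delta}_{\hvar B}$ into $\interp{\Delta'}_{\hvar B}$, being the identity on $\interp{\Delta'}_{\hvar B}$ and sending $\hvar y$ and $\hvar f$ to $\hvar x$ and $\refl_{\interp{C}_{\hvar B},\hvar x}$, both scoped in $\interp{\Delta'}_{\hvar B}$; composing with $\gamma$, whose image is scoped in $\interp{\Delta}_{\hvar B}$, yields by the compositional structure of substitutions (Appendix~\ref{app:cwf}) a substitution whose image is scoped in $\interp{\Delta'}_{\hvar B}$. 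The induction hypothesis then bounds the free variables of $p$ in $\interp{\Delta'}_{\hvar B}$, hence in $\interp{\Delta}_{\hvar B}$. Collecting the bounds on all constituents, every free variable of $\elim_{\hvar B}(\Delta,A,\Gamma,\gamma)$ is bound in $\interp{\Delta}_{\hvar B}$, closing the induction.
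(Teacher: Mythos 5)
Your proof is correct and follows essentially the same route as the paper: induction on the ps-context \(\Delta\), with the base case immediate and the inductive case reduced to the induction hypothesis applied to the recursive argument \(p\). You are in fact more careful than the paper's own proof, which only accounts for the free variables of \(p\) together with \(\{\hvar x,\hvar y,\hvar f\}\) and leaves implicit the scoping of the motive \(P\) and of \(\interp{C}_{\hvar B}\), which you spell out via the explicit well-scopedness invariant on \(\gamma\) carried through the recursive call.
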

\begin{proof}
  We can verify this by induction on the ps-context \(\Delta\).
  \begin{itemize}
  \item For the context \((\cvar x : \obj)\), the free variables of the term
    \[
      \elim_{\hvar B}((\cvar x: \obj),A,\Gamma,\gamma)=\refl_{\hvar B,\hvar x}
    \]
    are \(\hvar B\) and \(\hvar x\) and both are bound in the context
    \(\interp{(\cvar x : \obj)}_{\hvar B} = (\hvar B : \Type, \hvar x :
    \El(\hvar B))\).
  \item For the context
    \((\Delta,\cvar y : C, \cvar f : \arr[C]{\cvar x}{\cvar y})\), recall that
    we have
    \[
      \elim_{\hvar B}((\Delta,\cvar y:C,\cvar{f}:\arr[C]{\cvar x}{\cvar y}),
      A,\Gamma,\gamma) = \J({\interp{C}_{\hvar B}}, \hvar x,
      P(\hvar{x'},\hvar{y'},\hvar{f'}), p)\ \hvar y\ \hvar f
\]
where
\begin{align*}
  P(\hvar{x'},\hvar{y'},\hvar{f'})
  & \defeq \interp{A}_{\hvar B} [\gamma]
    [\sub{\id_{\interp{(\Delta,\cvar y : C,\cvar f : \arr[C]{\cvar x}{\cvar
    y})}_{\hvar B}}, \hvar{x}\mapsto \hvar{x'},\hvar{y}\mapsto\hvar{y'},\hvar{f}\mapsto\hvar{f'}}] \\
  p &\defeq \elim_{\hvar B}(\Delta,A,\Gamma,\gamma\circ
      \sub{\id_{\interp{\Delta}_{\hvar B}}, \hvar y\mapsto \hvar x,
      \hvar f \mapsto \refl_{\interp{C}_{\hvar B},\hvar x}}).
\end{align*}
Hence, the free variables of the term
\(\elim_{\hvar B}((\Delta,\cvar y:C,\cvar{f}:\arr[C]{\cvar x}{\cvar y}),
A,\Gamma,\gamma)\) are the union of the free variables of the term \(p\) and the
set \(\{\hvar x,\hvar y, \hvar f\}\). By induction, all free variables of \(p\)
are bound in \(\interp{\Delta}_{\hvar B}\), thus, all the above variables are
bound in
\(\interp{(\Delta,\cvar y : C , \cvar f : \arr[\interp{C}_{\hvar B}]{\hvar
    x}{\hvar y})}_{\hvar B}\)
  \end{itemize}
\end{proof}

\translationinit*
\begin{proof}
  We first prove the result for terms, that we prove mutually inductively with
  the following proposition: for every ps-contexts \(\Gamma,\Delta\), type
  \(\typerule{\Gamma}{A}\) and substitution
  \(\subrule{\interp{\Delta}_{\hvar B}}{\gamma}{\interp{\Gamma}_{\hvar B}}\), as
  well as a substitution \(\subrule{(\cvar x : \obj)}{\delta}{\Delta}\), we have
  \[
    (\elim_{\hvar B}(\Delta,A,\Gamma,\gamma))[\interp{\delta}_{\hvar B}]\equiv
    \refl^{\dim A+1}_{\hvar B,\hvar x}.
  \]
  We proceed by structural induction on the term, and induction on the length of
  the ps-context.
  \begin{itemize}
  \item For the term \(\cvar {x}\) which is the unique variable of the context,
    by definition, we have
    \(\interp{\cvar x}_{\hvar B} = \hvar x = \refl_{\hvar B, \hvar x}^{0}\).
  \item For a coherence term of the form \(\coh_{\Gamma,C}[\gamma]\) with
    \(C[\gamma] = A\). Then
    \[
      \interp{\coh_{\Gamma,C}[\gamma]}_{\hvar B} = (\elim_{\hvar
        B}(\Gamma,C,\Gamma,\id_{\Gamma}))[\interp{\gamma}_{\hvar B}],
    \]
    so by the mutually inductive proposition, and using that \(\dim A = \dim
    C\), we get
    \begin{align*}
      \interp{\coh_{\Gamma,C}[\gamma]}_{\hvar B}
      &= \refl^{\dim C+1}_{\hvar B, \hvar x}\\
      &= \refl^{\dim A+1}_{\hvar B, \hvar x}.
    \end{align*}
  \item Given a ps-context \(\Gamma\) with a substitution
    \(\subrule{\interp{(\cvar y : \obj)}_{\hvar
        B}}{\gamma}{\interp{\Gamma}_{\hvar B}}\) and a substitution
    \(\subrule{(\cvar x : \obj)}{\delta}{(\cvar y:\obj)}\), the only term of
    type \(\obj\) derivable in the context \((\cvar x : \obj)\) is \(\cvar x\)
    thus we have \(\delta = \sub{\cvar y\mapsto \cvar x}\). Then, we have
    \begin{align*}
      \elim((\cvar y : \obj),A,\Gamma,\gamma)[\interp{\delta}_{\hvar B}]
      &= \refl^{\dim A + 1}_{\hvar B,\hvar y}[\hvar B\mapsto\hvar B,\hvar
        y\mapsto \hvar x]\\
      &= \refl^{\dim A+1}_{\hvar B,\hvar x}.
    \end{align*}
  \item Given a ps-context \(\Gamma\) and a ps-context of the form
    \(\Delta' = (\Delta,\cvar z : D, \cvar f : \arr[D]{\cvar y}{\cvar z})\),
    with a substitution
    \(\subrule{\interp{\Delta'}_{\hvar B}}{\gamma}{\interp{\Gamma}_{\hvar B}}\)
    and a substitution \(\subrule{(\cvar x : \obj)}{\delta'}{\Delta'}\), the
    substitution \(\delta'\) decomposes as
    \(\delta'=\sub{\delta,\cvar z \mapsto v, \cvar f \mapsto w}\).
    We have
    \begin{align*}
      \elim(\Delta',A,\Gamma,\gamma){\interp{\delta'}_{\hvar B}}
      &\equiv
        \J
        ({\interp{C}_{\hvar B}},
        {\hvar y},
        {P(\hvar{y'},\hvar{z'},\hvar{f'})},
        p)[\interp{\delta'}_{\hvar B}]\
        (\interp{v}_{\hvar B})\
        (\interp{w}_{\hvar B})
    \end{align*}
    where
    \begin{align*}
      P(\hvar{y'},\hvar{z'},\hvar{f'})
      &
        \defeq \interp{C}_{\hvar B} [\gamma] [\sub{\id_{\interp{\Delta'}_{\hvar
        B}},\hvar{y}\mapsto \hvar{y'},\hvar{z}\mapsto \hvar{z'},\hvar{f}\mapsto \hvar{f'}}] \\
      p &\defeq \elim_{\hvar B}(\Delta,A,\Gamma,\gamma\circ
          \sub{\id_{\interp{\Delta}_{\hvar B}}, \hvar z\mapsto \hvar y,
          \hvar f \mapsto \refl_{\interp{C}_{\hvar B},\hvar
          y}}).
    \end{align*}
    By the
    structural induction case for terms, we have
    \begin{align*}
      \interp{v}_{\hvar B} &\equiv \refl^{\dim D + 1}_{\hvar B,\hvar x}
      & \interp{w}_{\hvar B} &\equiv \refl^{\dim D + 2}_{\hvar B,\hvar x},
    \end{align*}
    and since \(\dim D + 2 \geq 1\), \(\refl^{\dim D+2}_{\hvar B,\hvar x}\) is a
    reflexivity term the computation rule~\eqref{eq:eta-j} shows
    \begin{align*}
      \elim(\Delta',A,\Gamma,\gamma){\interp{\delta'}_{\hvar B}}
      &\equiv p[\interp{\delta'}_{\hvar B}].
    \end{align*}
    By Lemma~\ref{lemma:vars-in-elim}, we have
    \(p[\interp{\delta'}_{\hvar B}] \equiv p[\interp{\delta}_{\hvar B}]\). By
    \eqref{eq:cut-admissibility} we have
    \[
      \subrule{\interp{\Delta}_{\hvar
        B}}{\gamma\circ\sub{\id_{\interp{\Delta}_{\hvar B}},\hvar y\mapsto \hvar
        x,\hvar f\mapsto \refl_{\interp{C}_{\hvar B},\hvar
          x}}}{\interp{\Gamma}_{\hvar B}},
  \]
  and by inversion, we have \(\subrule{(\cvar x:\obj)}{\delta}{\Delta}\), thus
  by induction, this shows that
  \(p[\interp{\delta}_{\hvar B}] \equiv \refl^{\dim A+1}_{\hvar B,\hvar x}\).
  \end{itemize}

  The result of type is a consequence of the one on terms. Indeed, we proceed by
  structural induction on the type. For the type \(\obj\), since we have
  \(\interp{\obj}_{\hvar B} = \hvar B = \Id^{0}_{\hvar B, \hvar x}\), and
  considering the type \(\typerule{(\cvar x:\obj)}{\arr[A]{u}{v}}\) in \catt, we
  have derivations of the judgements \(\termrule{(\cvar:\obj)}{u}{A}\) and
  \(\termrule{(\cvar x :\obj)}{v}{A}\), and so by induction and by the result on
  terms, we get the following equalities
  \begin{align*}
    \interp{A}_{\hvar B} &\equiv \Id^{\dim A+1}_{\hvar B,\hvar x}
    & \interp{u}_{\hvar B} &\equiv \refl^{\dim A+1}_{\hvar B,\hvar x}
    & \interp{v}_{\hvar B} &\equiv \refl^{\dim A+1}_{\hvar B,\hvar x}
  \end{align*}
  This shows that we have
  \[
    \interp{\arr[A]{u}{v}}_{\hvar B}\equiv \Id_{\Id^{\dim A +1}_{\hvar B,\hvar
        x}}(\refl^{\dim A+1}_{\hvar B,\hvar x},\refl^{\dim A+1}_{\hvar B},\hvar
    x) = \Id^{\dim A+2}_{\hvar B,\hvar x}. \qedhere
  \]
\end{proof}

\translationcatt*
\begin{proof}
  We prove this result by mutual induction, the induction scheme following that
  of the definition of the translation function \(\interp{\_}_{\hvar B}\) and of
  \(\elim_{\hvar B}\). The cases of contexts, types and variables are similar to
  those of the proof of Lemma~\ref{lemma:translation-gsett}.
  \begin{itemize}
  \item For the empty context \(\ctxrule{\emptycontext}\) in \catt, obtained by
    \ruleref{rule:ce}, we have
    \(\interp{\emptycontext}_{\hvar B} = (\hvar B : \Type)\), and the rules of
    \hott allow to build a derivation of \(\ctxrule{(\hvar B : \Type)}\).
  \item For the context \(\ctxrule{(\Gamma,x:A)}\) in \catt obtained by
    application of \ruleref{rule:cc} from derivations of \(\ctxrule{\Gamma}\)
    and \(\typerule{\Gamma}{A}\), by induction we have derivations for the two
    following judgements in \hott
    \begin{align*}
      \ctxrule{\interp{\Gamma}_{\hvar B}}
      & & \typerule{\interp{\Gamma}_{\hvar B}}{\interp{A}_{\hvar B}}
    \end{align*}
  \item For the type \(\typerule{\Gamma}{\obj}\) in \catt obtained by
    \ruleref{rule:obj} from a derivation of \(\ctxrule{\Gamma}\), by induction
    we have a derivation of \(\ctxrule{\interp{\Gamma}_{\hvar B}}\) in \hott.
    Moreover, the first assignation in this context is \((\hvar B : \Type)\) and
    since the name \(\hvar B\) is not a valid variable in \catt, it cannot be
    overridden in \(\interp{\Gamma}_{\hvar B}\), thus showing that
    \(\assoc(\interp{\Gamma}_{\hvar B}) = \Type\). This gives by applying
    \ruleref{rule:var} and then \ruleref{rule:el}, a derivation of
    \[
      \typerule{\Gamma}{\El(\hvar B)}.
    \]
  \item For the type \(\typerule{\Gamma}{\arr[A]{u}{v}}\) in \catt, obtained by
    \ruleref{rule:arr} from derivations for the following three judgements
    \begin{align*}
      \typerule{\Gamma}{A} && \termrule{\Gamma}{u}{A} && \termrule{\Gamma}{v}{A}
    \end{align*}
    we obtain by the induction cases for types and terms, a derivation for the
    three corresponding judgements in \hott
    \begin{align*}
      \typerule{\interp{\Gamma}_{\hvar B}}{\interp{A}_{\hvar B}}
      && \termrule{\interp{\Gamma}_{\hvar B}}{\interp{u}_{\hvar B}}{\interp{A}_{\hvar B}}
      && \termrule{\interp{\Gamma}_{\hvar B}}{\interp{v}_{\hvar
         B}}{\interp{A}_{\hvar B}}
    \end{align*}
    Then we obtain by \ruleref{rule:id}, a derivation of
    \(\typerule{\interp{\Gamma}_{\hvar B}}{\Id_{\interp{A}_{\hvar
          B}}(\interp{u}_{\hvar B},\interp{v}_{\hvar B})}\).
  \item For a term of \catt given as a variable
    \(\termrule{\Gamma}{\cvar x}{A}\) obtained by \ruleref{rule:var} from a
    derivation of \(\ctxrule{\Gamma}\) such that \(\assoc(\Gamma,\cvar x) = A\),
    we obtain by induction a derivation of
    \(\ctxrule{\interp{\Gamma}_{\hvar B}}\). Moreover, by definition of
    \(\interp{\Gamma}_{\hvar B}\), we have
    \(\assoc(\interp{\Gamma}_{\hvar B},\hvar x) = \interp{A}_{\hvar B}\), which
    lets us derive by \ruleref{rule:var} a derivation of the following judgement
    in \hott
    \[
      \termrule{\Gamma}{\hvar x}{\interp{A}_{\hvar B}}.
    \]
  \item For a term of the form
    \(\termrule{\Delta}{\coh_{\Gamma,A}[\gamma]}{A[\gamma]}\) in \catt obtained
    by application of \ruleref{rule:coh} from derivations of
    \(\cohrule{\Gamma}{A}\) and \(\subrule{\Delta}{\gamma}{\Gamma}\), we get by
    the induction cases for coherences and substitutions derivations for the
    corresponding judgements in \hott:
    \begin{align*}
      \termrule{\interp{\Gamma}_{\hvar B}}{\interp{\coh_{\Gamma,A}}_{\hvar
      B}}{\interp{A}_{\hvar B}}
      && \subrule{\interp{\Delta}_{\hvar B}}{\interp{\gamma}_{\hvar
         B}}{\interp{\Gamma}_{\hvar B}}
    \end{align*}
    and we then obtain, by~\eqref{eq:cut-admissibility},
    Lemma~\ref{lemma:translation-functor} and~\eqref{eq:congruence} a
    derivation of the judgement
    \[
      \termrule{\interp{\Delta}_{\hvar
          B}}{\interp{\coh_{\Gamma,A}[\gamma]}_{\hvar
          B}}{\interp{A[\gamma]}_{\hvar B}}.
    \]
  \item For the empty substitution \(\subrule{\Gamma}{\sub{}}{\emptycontext}\)
    in \catt obtained by \ruleref{rule:se} from a derivation of
    \(\ctxrule{\Gamma}\), we have by induction a derivation of
    \(\ctxrule{\interp{\Gamma}_{\hvar B}}\). Moreover, as proven in the case for
    the type \(\typerule{\Gamma}{\obj}\), we can construct a derivation of
    \(\termrule{\interp{\Gamma}_{\hvar B}}{\hvar B}{\Type}\), which lets us use
    \ruleref{rule:se} and \ruleref{rule:sc}to produce a derivation of
    \[
      \subrule{\interp{\Gamma}_{\hvar B}}{\sub{\hvar B \mapsto \hvar B}}{(\hvar
        B : \Type)}
    \]
  \item For a substitution of the form
    \(\subrule{\Delta}{\sub{\gamma,x\mapsto t}}{(\Gamma,x:A)}\) in \catt
    obtained by \ruleref{rule:sc} from derivations of the following judgements
    \begin{align*}
      \subrule{\Delta}{\gamma}{\Gamma} && \typerule{\Gamma}{A} && \termrule{\Delta}{t}{A[\gamma]}
    \end{align*}
    we get by the induction cases for substitutions, contexts and terms, a
    derivation of the corresponding judgements in \hott
    \begin{align*}
      \subrule{\interp{\Delta}_{\hvar B}}{\interp{\gamma}_{\hvar
      B}}{\interp{\Gamma}_{\hvar B}}
      && \typerule{\interp{\Gamma}_{\hvar B}}{\interp{A}_{\hvar B}}
      && \termrule{\interp{\Delta}_{\hvar B}}{\interp{t}_{\hvar
         B}}{\interp{A[\gamma]}_{\hvar B}}
    \end{align*}
    Applying Lemma~\ref{lemma:translation-functor} then allows us to apply
    \ruleref{rule:sc} to obtain a derivation of the judgement
    \[
      \subrule{\interp{\Delta}_{\hvar B}}{\sub{\interp{\gamma}_{\hvar B}, \hvar
          x \mapsto \interp{t}_{\hvar B}}}{(\interp{\Gamma}_{\hvar B},\hvar x :
        \interp{A}_{\hvar B})}
    \]

  \item For a coherence \(\cohrule{\Gamma}{A}\) in \catt obtained by
    \ruleref{rule:coh-wd}, we have derivations of \(\psrule{\Gamma}\) and
    \(\typerule{\Gamma}{A}\), thus, we get by the inductive case for
    \(\elim_{\hvar B}\) as well as~\eqref{eq:functorial-action} and the fact
    that \(\interp{\id_{\Gamma}}_{\hvar B} = \id_{\interp{\Gamma}_{\hvar B}}\) a
    derivation of
    \[
      \termrule{\interp{\Gamma}_{\hvar B}}{\interp{\coh_{\Gamma,A}}_{\hvar
          B}}{\interp{A}_{\hvar B}}
    \]
  \item Given a ps-context \(\Gamma\), a type \(\typerule{\Gamma}{A}\) and a
    substitution \(\subrule{\Gamma}{\gamma}{(\cvar x : \obj)}\) in \catt, by an
    argument similar to that of the case of context, by
    Lemma~\ref{lemma:translation-gsett} applied to the term
    \(\termrule{(\cvar x : \obj)}{\cvar x}{\obj}\), we have a derivation of
    \(\termrule{\interp{(\cvar x : \obj)}_{\hvar B}}{\hvar x}{\El(\hvar B)}\).
    By Lemma~\ref{lemma:iterated-refl}, we get a derivation of the judgement
    \[
      \termrule{\interp{(\cvar x : \obj)}_{\hvar B}}{\refl^{\dim A+1}_{\hvar
          B,\hvar x}}{\Id^{\dim A+1}_{\hvar B,\hvar x}}.
    \]
    Then, applying~\eqref{eq:cut-admissibility} provides a derivation of
    \(\typerule{(\cvar x : \obj)}{A[\gamma]}\), and thus
    Lemma~~\ref{lemma:translation-init} then shows
    \(\interp{A[\gamma]}_{\gamma} \equiv \Id^{\dim A+1}_{\hvar B,\hvar x}\),
    which by~\eqref{eq:congruence} gives a derivation of the judgement
  \[
    \termrule{\interp{(\cvar x : \obj)}_{\hvar B}}{\refl^{\dim A+1}_{\hvar
        B,\hvar x}} {\interp{A[\gamma]}_{\hvar B}}.
  \]
  which is syntactically equal to
  \[
    \termrule{\interp{(\cvar x : \obj)}_{\hvar B}}{\elim_{\hvar B}((\cvar
      x:\obj),A,\Gamma,\gamma)} {\interp{A[\gamma]}_{\hvar B}}.
  \]
\item Given a ps-context \(\Gamma\), a type \(\typerule{\Gamma}{A}\) a
  ps-context and a substitution respectively of the following forms
  \begin{align*}
    \Delta' & = (\Delta,\cvar z : C, \cvar f : \arr[C]{\cvar y}{\cvar z})\\
    \delta' &= \sub{\delta,\cvar z \mapsto v, \cvar f \mapsto w}
  \end{align*}
  satisfying \(\subrule{\Gamma}{\delta'}{\Delta'}\), recall that we have
  \[
    \elim_{\hvar B}((\Delta,\cvar y:C,f:\arr[C]{\cvar x}{\cvar y}), A,\Gamma,\gamma)
    \defeq
    \J(
    {\interp{C}_{\hvar B}},
    \hvar y,
    {P(\hvar {y'},\hvar{z'},\hvar{f'})},
    p)\
    \hvar z\
    \hvar f
\]
  where \(P\) and \(p\) are respectively defined as follows
  \begin{align*}
    P(\hvar{y'},\hvar{z'},\hvar{f'})
    & \defeq \interp{A}_{\hvar
      B} [\gamma][\sub{\id_{\interp{\Delta'}_{\hvar B}},\hvar{y}\mapsto\hvar{y'},\hvar{z}\mapsto\hvar{z'},\hvar{f}\mapsto\hvar{f'}}] \\
    p &\defeq \elim_{\hvar B}(\Delta,A,\Gamma,\gamma\circ
        \sub{\id_{\interp{\Delta}_{\hvar B}}, \hvar z\mapsto \hvar y,
        \hvar f \mapsto \refl_{\interp{C}_{\hvar B},\hvar y}}).
  \end{align*}
  We first show that this application of the \(\J\) rule is well-formed in the
  context \(\interp{\Delta'}_{\hvar B}\), that is that it follows the premises
  of \ruleref{rule:j} and of the function application rule. We first note
  that by Lemma~\ref{lemma:admissibility-ps} we have derivation for the
  following judgements
  \begin{align*}
    \typerule{\Delta'}{C} & & \termrule{\Delta'}{\cvar y}{C}\\
    \termrule{\Delta'}{\cvar z}{C} & & \termrule{\Delta'}{\cvar f}{\arr[C]{\cvar
                                       y}{\cvar z}}
  \end{align*}
  By Lemma~\ref{lemma:translation-gsett}, those provide derivations for each of
  the corresponding judgements
    \begin{align*}
      \typerule{\interp{\Delta'}_{\hvar B}}{\interp{C}_{\hvar B}}
      & & \termrule{\interp{\Delta'}_{\hvar B}}{\hvar y}{\interp{C}_{\hvar B}}\\
      \termrule{\interp{\Delta'}_{\hvar B}}{\hvar z}{\interp{C}_{\hvar B}}
      & & \termrule{\interp{\Delta'}_{\hvar B}}{\hvar f}{\Id_{\interp{C}_{\hvar
          B}}(\hvar y,\hvar z)}
    \end{align*}
    So it suffices to show that \(P\) and \(p\) satisfy the premises of
    \ruleref{rule:j}. First, define the following context and substitution
    in \hott:
    \begin{align*}
      \Theta &\defeq (\interp{\Delta'}_{\hvar{B}},\hvar {y'} : \interp{C}_{\hvar B},
        \hvar{z'} : \interp{C}_{\hvar B},\interp{f'}:\Id_{\interp{C}_{\hvar
               B}}(\hvar y',\hvar z')) \\
      \vartheta & \defeq \sub{\id_{\interp{\Delta'}_{\hvar B}},\hvar y \mapsto \hvar{y'},
      \hvar{z}\mapsto\hvar{z'},\hvar{f}\mapsto\hvar{f'}}.
    \end{align*}
    The judgement \(\subrule{\Theta}{\vartheta}{\interp{\Delta'}}_{\hvar B}\) is
    derivable in \hott, and by the inductive for types,so is the judgement
    \(\typerule{\interp{\Gamma}_{\hvar B}}{\interp{A}_{\hvar B}}\), so
    applying~\eqref{eq:cut-admissibility} twice gives a derivation of the
    judgement
    \[
      \typerule{\Theta} {P(\hvar{y'},\hvar{z'},\hvar{f'})}.
    \]
    Finally, we note that by~\eqref{eq:cut-admissibility}, we have a
    derivation of
    \[
      \subrule{\interp{\Delta}_{\hvar B}}{\gamma\circ
      \sub{\id_{\interp{\Delta}_{\hvar B}},\hvar z \mapsto \hvar y,\hvar f
        \mapsto \refl_{\interp{C}_{\hvar B},\hvar y}}}{\interp{\Gamma}_{\hvar
        B}}.
  \]
  Thus, by the inductive case for \(\elim_{\hvar B}\), decreasing on the first
  argument \(\Delta\), we get a derivation of
  \[
    \termrule{\Delta}{p}{\interp{A}_{\hvar B}[\gamma\circ
      \sub{\id_{\interp{\Delta}_{\hvar B}},\hvar z \mapsto \hvar y, \hvar f
        \mapsto \refl_{\interp{C}_{\hvar B},\hvar y}}]}
  \]
  By functoriality of the action of substitution, we have the equality
  \begin{align*}
    \interp{A}_{\hvar B}[\gamma\circ
    \sub{\id_{\interp{\Delta}_{\hvar B}},\hvar z \mapsto \hvar y, \hvar f
    \mapsto \refl_{\interp{C}_{\hvar B},\hvar y}}]
    &= P(\hvar y, \hvar y, \refl_{\interp{C}_{\hvar B},\hvar y}).
  \end{align*}
  This shows that the application of \(\J\) is valid, and provides with
  \ruleref{rule:j} a derivation of the judgement
  \[
    \termrule{\interp{\Delta'}_{\hvar
        B}}{\elim(\Delta',A,\Gamma,\interp{\gamma}_{\hvar B})}{P(\hvar y,\hvar
      z,\hvar f)}
  \]
  Since we have \(P(\hvar y,\hvar z,\hvar f) = \interp{A}_{\hvar B}[\gamma]\),
  this in facts gives a derivation of
  \[
    \termrule{\interp{\Delta'}_{\hvar
        B}}{\elim(\Delta',A,\Gamma,\interp{\gamma}_{\hvar B})}{\interp{A}_{\hvar
      B}[\gamma]}
  \]
\end{itemize}
\end{proof}

\section{The Eckmann-Hilton cell in CaTT}
\label{app:eh-catt}
\begin{figure}
  \centering
\begin{verbatim}
coh unitl (x(f)y) : comp (id _) f -> f
coh unit (x) : comp (id x) (id x) -> id x
coh lsimp (x) : (unitl (id x)) -> unit x
coh Ilsimp (x) : I (unitl (id x)) -> I (unit x)
coh exch (x(f(a)g)y(h(b)k)z) :
  comp (comp _ [b]) (id (comp f k)) (comp [a] _) -> comp [a] [b]

coh eh1 (x(f(a)g(b)h)y) :
comp a b -> comp (I (unitl f))
                 (comp (comp _ [a])
                       (comp (unitl g) (I (op { 1 } (unitl g))))
                       (comp [b] _))
                 (op { 1 } (unitl h))

let eh2 (x : *) (a : id x -> id x) (b : id x -> id x) =
comp [Ilsimp _]
     [comp (comp _
                 [comp
                   (comp [lsimp _] [op { 1 } (Ilsimp _)])
                   (U (unit _))]
                 _)
           (exch b a)]
     [op { 1 } (lsimp _)]

let eh (x : *) (a : id x -> id x) (b : id x -> id x) =
comp (eh1 a b)
     (eh2 a b)
     (I (op { 1 } (eh2 b a)))
     (I (op { 1 } (eh1 b a)))

\end{verbatim}
  \caption{Definition of the Eckmann-Hilton cell in \catt}
  \label{fig:eh}
\end{figure}

\end{document}